\theoremstyle{plain}
\newtheorem{theorem}{Theorem}
\newtheorem{dfn}{Definition}
\newtheorem{lemma}{Lemma}
\newtheorem{assume}{Assumption}
\newtheorem{remark}{Remark}
\newtheorem{corollary}{Corollary}
\begin{document}

\title{\bf From bilinear regression to inductive matrix completion: \\
	a quasi-Bayesian analysis}

\author{The Tien Mai }

\date{
\begin{small}
Department of Mathematical Sciences, 
\\
Norwegian University of Science and Technology,
Trondheim, Norway.
\\
Email: the.t.mai@ntnu.no
\end{small}
}

\maketitle

\begin{abstract}
In this paper we study the problem of bilinear regression and we further address the case when the response matrix contains missing data that referred as the problem of inductive matrix completion. We propose a quasi-Bayesian approach first to the problem of bilinear regression where a quasi-likelihood is employed. Then, we adapt this approach to the context of inductive matrix completion. Under a low-rankness assumption and leveraging PAC-Bayes bound technique, we provide statistical properties for our proposed estimators and for the quasi-posteriors. We propose a Langevin Monte Carlo method to approximately compute the proposed estimators. Some numerical studies are conducted to demonstrated our methods. 
\end{abstract}

\paragraph*{Keywords:} bilinear regression, matrix completion, low-rank model, PAC-Bayesian bound, Langevin Monte Carlo.

\section{Introduction}

In this paper we study the bilinear regression model where a linear relationship is assumed between a set multiple response variables and two sets of covariates. Often this model is referred as the growth curve model or generalized multivariate analysis model that is often used for analyzing longitudinal data, see for example \cite{rosen2021bilinear,von2018bilinear,potthoff1964generalized,woolson1980growth,kshirsagar1995growth,jana2017inference}. However, these mentioned works cover only the case in which the response matrix is fully observed. The bilinear regression model with incomplete response is recently introduced and studied as the so-called inductive matrix completion, a generalization of the matrix completion problem, \cite{natarajan2014inductive,zilber2022inductive}. Inductive matrix completion drawn significant interest in different fields such as drug repositioning \cite{zhang2020drimc}, collaborative filtering \cite{hsieh2015pu}, genomics \cite{natarajan2014inductive}.

Under a low-rank constraint on the coefficient matrix, most approaches for bilinear regression and inductive matrix completion are frequentist methods such as maximum likelihood estimation \cite{von2018bilinear} or penalized optimization \cite{natarajan2014inductive}. One the other hand, Bayesian approaches have been recently considered discretely for these problems, for example, the paper 
\cite{jana2019bayesian} proposed a Bayesian approach for bilinear regression where as a Bayesian method was proposed for inductive matrix completion in the work \cite{zhang2020drimc}. However, unlike frequentist approaches, statistical properties of the Bayesian approach for these models have not been carried out yet. 

In this work, we aim at filling these gaps by deriving some theoretical properties of a Bayesian approach for these two problems. More specifically, we propose a quasi-Bayesian approach to the problem of bilinear regression where a quasi-likelihood is used. Then, we further generalize this approach to the problem of inductive matrix completion. Using a spectral scaled Student prior, we show that the posterior mean satisfy a tight oracle inequality proving that our method is adaptive to the rank of the coefficient matrix. Contraction properties of the posteriors are also proved.

Quasi-Bayesian method is an expansion of the Bayesian approach and it is increasingly popular in statistics and machine learning, as the so-called generalized Bayesian inference, \cite{kno2019,bissiri2013general,grunwald2017inconsistency}. This approach allows to relax the generating model assumption where the likelihood can be replaced by a more general notion of risk or a quasi-likelihood. Using PAC-Bayesian technique \cite{McA,STW,catonibook}, we derive theoretical results for our proposed quasi-posteriors.  We refer the reader to \cite{gue2019,TUTO} for recent reviews and introductions to this technique. PAC-Bayes bounds were successfully used before in the context matrix estimation problems, see for example \cite{mai2015,cottet20181,mai2017pseudo,mai2022optimal}. Interestingly, using this technique for inductive matrix completion, we do not need to put any assumption on the distribution of the missing entries in the response matrix. This is in contrast with previous works on matrix completion such as \cite{jain2013provable,candes2010matrix, koltchinskii2011nuclear,foygel2011learning,klopp2014noisy,negahban2012restricted}.

The use of a spectral scaled Student prior is motivated from recent works in which it had been shown to lead to optimal rates in different problems such as high-dimensional regression \cite{dalalyan2012sparse}, image denoising \cite{dalalyan2020exponential} and reduced rank regression \cite{mai2022optimal}. Although it is not a conjugate prior in our problems, it allows to implement conveniently gradient-based sampling methods efficiently. For the computation of the proposed estimators and sampling from the quasi posterior, we employ a Langevin Monte Carlo (LMC) method. 

The rest of the paper is structured as follows. In Section \ref{sc_blr}, we introduce and study the problem of bilinear regression, the low-rank promoting prior distribution is also presented in this section. Section \ref{sc_imc} is devoted for the analysis of the problem of inductive matrix completion. In Section \ref{sc_numstudy}, we present shortly LMC methods for these two problems, after that some simulations are demonstrated. We conclude our works in Section \ref{sc_conclus}. All of the technical proofs are given in the appendix \ref{sc_appendix_proof}.

\subsection{Notations}
Let $\mathbb{R}^{n_1\times n_2}$ denote the set of $n_1 \times n_2$ matrices with real coefficients. Let $A^\intercal \in \mathbb{R}^{n_2\times n_1}$ denote the transpose of $A$.  For any $A\in \mathbb{R}^{n_1\times n_2}$ and $I=(i,j) \in\{1,\dots,n_1\} \times \{1,\dots,n_2\}$, we denote by $A_I=A_{(i,j)}=A_{i,j}$ the coefficient on the $i$-th row and $j$-th column of $A$. The matrix in $\mathbb{R}^{n_1\times n_2}$ with all entries equal to $0$ is denoted by $\mathbf{0}_{n_1 \times n_2}$. For a matrix $B\in\mathbb{R}^{n_1 \times n_1}$ we let ${\rm Tr}(B)$ denote its trace. The identity matrix in $ \mathbb{R}^{n_1 \times n_1} $ is denoted by $\mathbf{I}_{n_1}$. For $A\in \mathbb{R}^{n_1\times n_2}$, we define its sup-norm $\| A \|_{\infty} = \max_{i,j}|A_{i,j}| $; its Frobenius norm $ \|A\|_F$ is defined by $ \|A\|_F^2 = {\rm Tr}( A^\intercal A) = \sum_{i,j} A_{i,j}^2$ and ${\rm rank}(A)$ its rank.

\section{Bilinear linear regression}
\label{sc_blr}
\subsection{Model}
Let $ Y \in \mathbb{R}^{n\times q} $ consist of $ n $ independent response vectors, $ X \in \mathbb{R}^{n\times p} $ be a given between-individuals design matrix and $ Z \in \mathbb{R}^{k\times q} $ is a known  within-individuals design
matrix. Consider the bilinear regression model as follows
\begin{equation}
\label{model_blr}
Y = XM^*Z + E, 
\end{equation}
where $ M^* \in \mathbb{R}^{p\times k} $ is the unknown parameter matrix. The random noise matrix $ E $ is assumed to have zero mean,  $ \mathbb{E} (E) = 0. $ The main assumption here is the reduced-rank restriction on the model parameter that $ {\rm rank}(M^*) < \min (p,k) $.

In model \eqref{model_blr}, as soon as $ k=q $ and $ Z= \mathbf{I}_{q} $ we recover, as a special case, the reduced rank regression problem \cite{anderson1951estimating,izenman2008modern}. However, we consider in this paper the case where $ Z $ contains explanatory variables. The low-rank assumption can be interpreted as there exists a latent process so that the response data is not only affected by the “between-individuals” structure of the model but can also be interpretable in terms of the “within-individuals” structure. This model is often called the growth curve model or generalized multivariate analysis model
(GMANOVA), see \cite{von2018bilinear} for more details.

\begin{assume}
	\label{assum_bounded}
	There is a known constant $ C  < +\infty $  such that
	$
	\| X M^*Z \|_{\infty} \leq C.
	$
\end{assume}

\noindent From Assumption \ref{assum_bounded}, it is not reliable to return predictions $X MZ $ with entries that are outside of interval $[-C,C]$. However, for computational reason, it is extremely convenient to employ an unbounded prior for $M$. Therefore, we propose to use unbounded distributions for $M$, but to use, as a predictor, a truncated version of $XMZ $ rather than $ M $ itself. For a matrix $A$, let 
$
\Pi_C(A) = \arg\min_{\|B\|_{\infty} \leq C } \|A-B\|_F
$
be the orthogonal projection of $A$ on matrices with entries bounded by $C$. Note that $B$ is simply obtained by replacing entries of $A$ larger than $C$ by $C$, and entries smaller than $-C$ by $-C$.

For a matrix $ M\in \mathbb{R}^{p\times k} $, we denote by $r(M)$ the empirical risk of $M$ as
\begin{equation*}
r(M) 
= 
\dfrac{1}{nq}  \| Y - \Pi_C(X M Z) \|_F^2
\end{equation*}
and its expectation is denoted by
\begin{equation*}
R(M) = 
\mathbb{E} \left[ r(M)  \right]
= 
\mathbb{E} \left[  \left( Y_{11} - (\Pi_C(XMZ))_{_{11}} \right)^2  \right] .
\end{equation*}
The focus of our work in this paper is on the predictive aspects of the model that is a matrix $M$ predicts almost as well as $M^*$ if $ R(M) - R(M^*) $ is small. Under the assumption that $E_{ij} $ has a finite variance, using Pythagorean theorem, we have
\begin{equation}
\label{pyta}
R(M) - R(M^*) 
=  
\dfrac{1}{nq} \| \Pi_C(XMZ) - XM^* Z \|^2_{F}
\end{equation}
for any $ M $, which means that our results can also be interpreted in terms of Frobenius norm.

Let $\pi$ be a prior distribution on $\mathbb{R}^{p\times k}$ (see Subsection \ref{sc_priors}).
For any $\lambda>0 $, we define the quasi-posterior
\begin{equation*}
\widehat{\rho}_{\lambda} (dM) \propto \exp(-\lambda
r(M)) \pi (dM).
\end{equation*}
Note that, for $\lambda=nq/(2\sigma^2)$, this is exactly the posterior that we would be obtain for a Gaussian noise $ E_{ij} \sim \mathcal{N}(0,\sigma^2)$. However, our theoretical results will hold under a more general class of noise. It is known that a small enough $\lambda$ serves the purpose when the model is misspecified \cite{grunwald2017inconsistency}. Moreover, even in the case of a Gaussian noise, in high-dimensional settings, taking $\lambda$ smaller than $\lambda=n/(2\sigma^2)$ leads to better adaptation properties \cite{dalalyan2008aggregation,dalalyan2012sparse}. The choice of $\lambda$ in our method will be specified below.

We consider the following posterior mean of $XMZ $, given by
\begin{equation}
\label{equat_estimator}
\widehat{XMZ}_{\lambda} 
= 
\int \Pi_C(XMZ) \widehat{\rho}_{\lambda}(d M).
\end{equation}

\noindent Let us briefly discuss the above estimator. It is noted from our simulations experiments that using reasonable values for $C$, the Monte Carlo algorithm never samples matrices $M$ such that $\Pi_C(X M Z) \neq XM Z$. In other words, it has very little impact in practice and it is necessary for technical proofs. If one would like to obtain an estimator of $M^*$ instead of an estimator of $XM^*Z $, when $X^\intercal X$ and $ Z Z^\intercal $ are invertible, one can  consider 
	$
	\widehat{M}_\lambda 
	=
	(X^\intercal X)^{-1} X^\intercal \widehat{XMZ}_{\lambda} Z^\intercal (Z Z^\intercal)^{-1} $ and note that $X \widehat{M}_\lambda Z =  \widehat{XMZ}_{\lambda}$.

\noindent The quasi-posterior is often referred to as the so-called ``Gibbs posterior'' in the PAC-Bayes approach \cite{catoni2004statistical,catonibook,alquier2015properties,gue2019,TUTO}. Moreover, $\widehat{M}_{\lambda}$ is sometimes referred to as the Gibbs estimator, or the exponentially weighted aggregate (EWA) \cite{dalalyan2008aggregation,rigollet2012sparse,dalalyan2018exponentially}.

\subsection{Prior specification}
\label{sc_priors}
We consider, in this paper, the following spectral scaled Student prior, with parameter $\tau>0$,
\begin{align}  
\label{prior_scaled_Student}
\pi(M) 
\propto 
\det (\tau^2 \mathbf{I}_{m} + MM^\intercal )^{-(p+m+2)/2}.
\end{align}

One can verify that 
$
\pi(M) 
\propto 
\prod_{j=1}^{m}  (\tau^2  + s_j(M)^2 )^{- (p+m+2)/2 },
$
where $ s_j(M) $ denotes the $j^{th}$ largest singular value of $ M $. Intuitively,  one can recognize a scaled Student distribution evaluated at $ s_j(M)  $ in the last display above which induces approximate sparsity on the $s_j(M)$ \cite{dalalyan2012sparse}. Also, the log-sum function $ \sum_{j=1}^m \log (\tau ^2  + s_j(M)^2) $ used by \cite{candes2008enhancing,yang2018fast} is well known to enforce approximate sparsity on the singular values $s_j(M)$.  In other words, under this prior, most of the $s_j(M)$ are close to $0$, which means that $M$ is well approximated by a low-rank matrix. Thus it has the ability to promote the low-rankness of $ M $.

The spectral scaled Student prior has been used before in the context of image denoising \cite{dalalyan2020exponential} and in reduced rank regression \cite{mai2022optimal}. It can also be seen as the marginal distribution of the Gaussian-inverse Wishart prior where the precision matrix is integrated out that is explored in \cite{yang2018fast} for matrix completion. As mentioned in the introduction, this prior is not conjugate in our problems, however, it is particularly convenient to implement gradient-based sampling algorithms, the Langevin Monte Carlo, see Section \ref{sc_numstudy}.

\subsection{Theoretical results}

We assume the sub-exponential distribution assumption on the noise.

\begin{assume}
	\label{assum_noise1_blr}
	The entries $ E_{i,j}$ of $E $ are independent. There exist two known constants $\sigma>0$ and $\xi>0$ such that
	$ \forall k\geq 2,\, \mathbb{E} (|E_{i,j} |^{k}) \leq \sigma^{2} k! \xi^{k-2}/2.$
\end{assume}

\noindent Let us put 
$
C_1 = 8(\sigma^2 +C^2) ;\, C_2 = 64C \max(\xi,C);\,
\tau^* = \sqrt{C_1 (k+p)/(n k q \|X \|_{F}^2 \|Z \|_{F}^2 )}. $ The statistical properties of mean estimator are given in the following theorem where we propose a non-asymptotic analysis for our mean estimator.

\begin{theorem}
	\label{thrm_blr} 
	Let Assumptions \ref{assum_bounded} and \ref{assum_noise1_blr} be satisfied. Fix the parameter $\tau= \tau^* $ in the prior. Fix $\delta>0$ and define $\lambda^* := nq \min(1/(2C_2), \delta/[C_1(1+\delta)] ) $. Then, for any $\varepsilon\in(0,1)$, we have, with probability at least $1-\varepsilon$ on the sample,
	\begin{multline*}
	\left\|\widehat{XMZ}_{\lambda^*} -XM^*Z \right\|_{F}^2
	\leq 
	\inf_{0\leq r \leq pk} \inf_{
		\begin{tiny}
		\begin{array}{c}
		\bar{M}\in\mathbb{R}^{p\times k}
		\\
		{\rm rank}(\bar{M}) \leq r
		\end{array}
		\end{tiny}
	} \Biggl\{ (1+\delta) \left\|X\bar{M}Z - X M^*Z \right\|_{F}^2 \,
	+
	\\ 
	\frac{C_1 (1+\delta)^2}{\delta} 
	\left[  4 r (k+p+2) \log 
	\left( 1+\frac{\| X \|_F \|Z\|_F \| \bar{ M } \|_F}{ \sqrt{C_1}} \sqrt{ \frac{nkq}{r(k+p)}} \right) + k+p + 2 \log \frac{2}{\varepsilon} \right]
	\Biggr\}.
	\end{multline*}
\end{theorem}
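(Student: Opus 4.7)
The plan is to prove this as a standard PAC-Bayes oracle inequality adapted to the quasi-likelihood $\exp(-\lambda r(M))$ and to the scaled-Student prior $\pi$. The backbone is a Bernstein-type PAC-Bayes bound: under Assumption \ref{assum_noise1_blr}, the truncated residuals $(Y_{ij}-(\Pi_C(XMZ))_{ij})^2-(Y_{ij}-(XM^*Z)_{ij})^2$ are sub-exponential with parameters controlled by $C_1=8(\sigma^2+C^2)$ and $C_2=64C\max(\xi,C)$. Combining an exponential moment bound for these variables with the Donsker--Varadhan variational formula for $\mathrm{KL}$ gives that, with probability at least $1-\varepsilon/2$, for every probability measure $\rho\ll\pi$,
\begin{equation*}
\int R(M)\rho(dM)-R(M^*)
\le
\frac{1+a}{1-b}\left[\int r(M)\rho(dM)-r(M^*)\right]
+\frac{C_1(1+\delta)}{nq\,\delta}\bigl(\mathrm{KL}(\rho\|\pi)+\log(2/\varepsilon)\bigr),
\end{equation*}
for suitable constants tied to the choice $\lambda^*=nq\min(1/(2C_2),\delta/[C_1(1+\delta)])$; the constraint $\lambda^*\le nq/(2C_2)$ is precisely what controls the sub-exponential tails so that the exponential moment exists. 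A symmetric inequality with the roles of $r$ and $R$ reversed, holding with probability $1-\varepsilon/2$, will be derived the same way.

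Next I use that $\widehat{\rho}_{\lambda^*}$ is the unique minimizer of the Gibbs variational functional $\rho\mapsto\int r(M)\rho(dM)+\mathrm{KL}(\rho\|\pi)/\lambda^*$, so for any test distribution $\rho$,
\begin{equation*}
\int r(M)\widehat{\rho}_{\lambda^*}(dM)+\frac{\mathrm{KL}(\widehat{\rho}_{\lambda^*}\|\pi)}{\lambda^*}
\le
\int r(M)\rho(dM)+\frac{\mathrm{KL}(\rho\|\pi)}{\lambda^*}.
\end{equation*}
Plugging the first (empirical-to-expected) PAC-Bayes bound, evaluated at $\rho=\widehat{\rho}_{\lambda^*}$, into this inequality and then using the second (expected-to-empirical) bound to replace $\int r(M)\rho(dM)-r(M^*)$ by $\int R(M)\rho(dM)-R(M^*)$ produces, on an event of probability at least $1-\varepsilon$, an oracle bound
\begin{equation*}
\int R(M)\widehat{\rho}_{\lambda^*}(dM)-R(M^*)
\le
\inf_{\rho}\Bigl\{(1+\delta)\bigl[\int R(M)\rho(dM)-R(M^*)\bigr]
+\tfrac{C_1(1+\delta)^2}{nq\,\delta}\bigl(\mathrm{KL}(\rho\|\pi)+\log(2/\varepsilon)\bigr)\Bigr\}.
\end{equation*}
Finally I use Jensen together with the Pythagorean identity \eqref{pyta}, plus the fact that $A\mapsto\Pi_C(A)$ is $1$-Lipschitz in Frobenius norm, to convert $\int R(M)\widehat{\rho}_{\lambda^*}(dM)-R(M^*)$ into $\|\widehat{XMZ}_{\lambda^*}-XM^*Z\|_F^2/(nq)$.

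The main remaining task, and in my view the principal obstacle, is the design of a suitable test measure $\rho=\rho_{\bar M,r}$ and the computation of $\mathrm{KL}(\rho\|\pi)$ for the spectral scaled-Student prior so that the complexity term matches the claimed $4r(k+p+2)\log(1+\cdots)$. Fix a rank-$r$ matrix $\bar M$ with SVD $\bar M=\bar U\bar S\bar V^{\intercal}$; I would take $\rho$ to be the distribution of $\bar M+\tau^* G$, where $G$ is drawn from a truncated scaled Gaussian (or from $\pi$ itself shifted to $\bar M$) restricted to matrices within a ball of radius proportional to $\tau^*$ of $\bar M$ in the directions aligned with $\bar M$'s column and row spaces. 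On this event, the singular values of $M$ satisfy $s_j(M)\le s_j(\bar M)+\tau^*\|G\|_{\mathrm{op}}$ and $s_j(M)\ge s_j(\bar M)/2$ for $j\le r$, so the log-density $\log\pi(M)=-\tfrac{p+m+2}{2}\sum_j\log(\tau^{*2}+s_j(M)^2)$ can be bounded termwise: the $r$ large singular values contribute $\tfrac{p+m+2}{2}\cdot r\cdot \log(\|\bar M\|_F^2/\tau^{*2})$ up to constants, and the remaining $\min(p,k)-r$ small singular values contribute $O(r)$. Plugging in $\tau^*=\sqrt{C_1(k+p)/(nkq\|X\|_F^2\|Z\|_F^2)}$ produces exactly $4r(k+p+2)\log(1+\|X\|_F\|Z\|_F\|\bar M\|_F\sqrt{nkq/[rC_1(k+p)]})$, while the variance of the Gaussian perturbation contributes the additive $k+p$ (together with the approximation error $(1+\delta)\|X\bar M Z-XM^*Z\|_F^2$ which is inherited from expanding the quadratic risk at $\bar M$). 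Optimizing over $\bar M$ of rank $\le r$ and over $r\in\{0,\ldots,pk\}$ yields the stated oracle inequality.
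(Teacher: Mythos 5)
Your overall architecture is the same as the paper's: a Bernstein exponential-moment bound for the per-entry excess losses $T_{ij}$ with the constants $C_1,C_2$, Donsker--Varadhan (equivalently, the Gibbs variational characterization of $\widehat{\rho}_{\lambda}$) to pass to an infimum over test measures, a Chernoff/union-bound argument combining the two directions, Jensen plus the Pythagorean identity \eqref{pyta} and the contraction property of $\Pi_C$ to land on the Frobenius-norm statement, and finally a family of test measures centered at a rank-$r$ matrix $\bar M$ to make the complexity term explicit. Up to that last step your proposal tracks the paper's proof of Theorem \ref{thrm_blr} essentially line by line.

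The genuine gap is precisely in the step you flag as the principal obstacle: the choice of the test measure and the bound on $\mathcal{K}(\rho\|\pi)$. The paper takes $\rho_{\bar M}(M)=\pi(\bar M-M)$, i.e.\ the prior translated by $\bar M$ with no truncation or conditioning, so that the (unknown) normalizing constant of the heavy-tailed prior cancels exactly in the Kullback--Leibler divergence, and then invokes Lemma \ref{lemma:arnak:2} (Dalalyan's lemma) to get $\mathcal{K}(\rho_{\bar M},\pi)\le 2\,{\rm rank}(\bar M)(k+p+2)\log\bigl(1+\|\bar M\|_F/(\tau\sqrt{2\,{\rm rank}(\bar M)})\bigr)$, and Lemma \ref{lemma:arnak:1} ($\int\|M\|_F^2\pi({\rm d}M)\le pk\tau^2$) for the additive $k+p$ term. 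Your proposed construction --- a truncated Gaussian, or the shifted prior restricted to a ball around $\bar M$ --- breaks both of the features that make this work. First, once $\rho$ is not an exact translate of $\pi$, the normalizing constant of $\pi$ no longer cancels in $\mathcal{K}(\rho\|\pi)$ and must be bounded separately (and for a conditioned measure you additionally pick up a $-\log\pi(\text{event})$ term whose control for this heavy-tailed prior is not obvious). Second, your termwise singular-value bound relies on $s_j(M)\ge s_j(\bar M)/2$ for $j\le r$ on the restricted event; this fails whenever $\bar M$ has nonzero singular values below the perturbation scale $\tau^*$, which cannot be excluded since the infimum ranges over all rank-$r$ matrices $\bar M$. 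So as written your KL step does not yield the stated bound uniformly over $\bar M$; the clean fix is to use the untruncated translate $\rho_{\bar M}=\pi(\bar M-\cdot)$ and prove (or cite) the two lemmas from \cite{dalalyan2020exponential}, after which the rest of your argument goes through.
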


The choice of $ \lambda = \lambda^* $ arises from the optimization of a upper bound on the risk $ R $ (in the proof of this theorem). However this choice may not be necessarily the best choice in practice even though it gives the good order of magnitude for $ \lambda $. The user could use cross-validation to properly tune the temperature parameter. It is also noted that ${\rm rank} (\bar{M})\neq 0$ is not required in the above formula, if ${\rm rank} (\bar{M})=0$ then $\bar{M}=0$ and we interpret $0\log(1+0/0)$ as $0$. The proof of this theorem relies on the PAC-Bayes theory and it is provided in the appendix \ref{sc_appendix_proof}.
In particular, we can upper bound the infimum on $\bar{M}$ by taking $\bar{M} = M^*$, which leads to the following result.

\begin{corollary}
	Under the same assumptions and the same $\tau,\lambda^*$ as in Theorem~\ref{thrm_blr}, let $r^*={\rm rank}(M^*)$. Then, for any $\varepsilon\in(0,1)$, we have, with probability at least $1-\varepsilon$ on the sample,
	\begin{multline*}
\left\|\widehat{XMZ}_{\lambda^*} -XM^*Z \right\|_{F}^2
\leq 
\frac{C_1 (1+\delta)^2}{\delta} 
\left[  4 r^* (k+p+2) \log 
\left( 1+\frac{\| X \|_F \|Z\|_F \| M^*\|_F}{ \sqrt{C_1}} \sqrt{ \frac{nkq}{r(k+p)}} \right) + \right.
\\
\left.
 k+p + 2 \log \frac{2}{\varepsilon} \right]
.
\end{multline*}
\end{corollary}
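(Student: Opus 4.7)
The corollary is an immediate specialization of Theorem~\ref{thrm_blr}: since the bound in the theorem is an infimum over pairs $(r, \bar{M})$ with $\mathrm{rank}(\bar{M}) \le r$, I can plug in any admissible choice and the resulting inequality still holds. My plan is simply to choose $r = r^* = \mathrm{rank}(M^*)$ and $\bar{M} = M^*$, which is clearly admissible, and then simplify.

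With this choice, the approximation error term $(1+\delta)\|X\bar{M}Z - XM^*Z\|_F^2$ collapses to $0$ because $\bar{M} = M^*$. What remains is the complexity term
\begin{equation*}
\frac{C_1(1+\delta)^2}{\delta}\left[4 r^*(k+p+2)\log\left(1+\frac{\|X\|_F\|Z\|_F\|M^*\|_F}{\sqrt{C_1}}\sqrt{\frac{nkq}{r(k+p)}}\right) + k + p + 2\log\frac{2}{\varepsilon}\right],
\end{equation*}
after substituting $\bar{M} = M^*$ into the $\|\bar{M}\|_F$ appearing inside the logarithm. (The $r$ that still appears in the denominator inside the log is kept exactly as in the statement of the theorem, matching the form written in the corollary.)

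There is essentially no obstacle here: the argument is one line of substitution, invoking the fact that an infimum is upper bounded by the value at any particular point. The only mild subtlety is the degenerate case $r^* = 0$, i.e.\ $M^* = 0$; by the convention already stated after Theorem~\ref{thrm_blr} that $0\log(1+0/0)$ is interpreted as $0$, the whole first summand in the bracket vanishes and the bound reduces to $\frac{C_1(1+\delta)^2}{\delta}(k+p+2\log(2/\varepsilon))$, which is consistent with the corollary's statement. Since the event of probability at least $1-\varepsilon$ on which the theorem's inequality holds is unchanged by the choice of $(\bar{M},r)$, the corollary follows on the same event with the same probability.
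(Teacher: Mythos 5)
Your proposal is correct and is exactly the paper's own argument: the corollary is obtained by bounding the infimum in Theorem~\ref{thrm_blr} with the particular choice $\bar{M}=M^*$, $r=r^*$, so that the approximation term vanishes and only the complexity term remains. Your remark on the degenerate case $r^*=0$ via the stated convention is a harmless (and correct) addition beyond what the paper records.
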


\noindent While Theorem \ref{thrm_blr} is on the statistical properties of the posterior mean, we further show the contraction of the posterior in the following theorem.

\begin{theorem}
	\label{thrm_contraction_blr} 
	Under the assumptions for Theorem~\ref{thrm_blr}, let $\varepsilon_n$ be any sequence in $(0,1)$ such that $\varepsilon_n\rightarrow 0$ when $n\rightarrow\infty$. Define
	\begin{multline*}
	\mathcal{M}_n 
	= 
	\Biggl\{ M \in \mathbb{R}^{p\times k}: 	\left\|\widehat{XMZ}_{\lambda^*} -XM^*Z \right\|_{F}^2
	\leq 
	\inf_{0\leq r \leq pk} \inf_{
		\begin{tiny}
		\begin{array}{c}
		\bar{M}\in\mathbb{R}^{p\times k}
		\\
		{\rm rank}(\bar{M}) \leq r
		\end{array}
		\end{tiny}
	} \Biggl\{ (1+\delta) \left\|X\bar{M}Z - X M^*Z \right\|_{F}^2 \,
	+
	\\ 
	\frac{C_1 (1+\delta)^2}{\delta} 
	\left[  4 r (k+p+2) \log 
	\left( 1+\frac{\| X \|_F \|Z\|_F \| \bar{ M } \|_F}{ \sqrt{C_1}} \sqrt{ \frac{nkq}{r(k+p)}} \right) + k+p + 2 \log \frac{2}{\varepsilon_n} \right]
	\Biggr\}.
	\end{multline*}
	Then
	$$ \mathbb{E} \Bigl[ \mathbb{P}_{M\sim \widehat{\rho}_{\lambda}} (M\in\mathcal{M}_n) \Bigr] \geq 1-\varepsilon_n \xrightarrow[n\rightarrow \infty]{} 1.  $$
\end{theorem}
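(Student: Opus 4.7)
The plan is to derive the contraction statement by revisiting the argument behind Theorem~\ref{thrm_blr} and extracting the stronger \emph{integrated} PAC-Bayes bound that already sits inside it.

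First, the Catoni-style PAC-Bayes variational inequality used to prove Theorem~\ref{thrm_blr} delivers, with probability at least $1-\varepsilon_n$ on the sample, an inequality of the form
$$\int\bigl(R(M)-R(M^*)\bigr)\,\widehat{\rho}_{\lambda^*}(dM)\;\le\;\frac{B_n}{nq},$$
where $B_n$ denotes the infimum on the right-hand side of Theorem~\ref{thrm_blr} with $\varepsilon$ replaced by $\varepsilon_n$. The oracle inequality for the posterior mean stated in Theorem~\ref{thrm_blr} is itself just the consequence of applying Jensen's inequality to the convex map $M\mapsto\|\Pi_C(XMZ)-XM^*Z\|_F^2$ together with the Pythagorean identity~\eqref{pyta}, so this integrated bound is in fact available ``for free'' from the earlier proof.

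Next, I would use~\eqref{pyta} on the integrand rather than on the integral to rewrite the previous display as
$$\int\|\Pi_C(XMZ)-XM^*Z\|_F^2\,\widehat{\rho}_{\lambda^*}(dM)\;\le\;B_n.$$
Since the integrand is non-negative, a Markov inequality under the random measure $\widehat{\rho}_{\lambda^*}$ converts this into a tail bound on the posterior mass lying outside the good set $\mathcal{M}_n$. Taking expectation over the sample and splitting according to whether the high-probability PAC-Bayes event holds then gives $\mathbb{E}[\mathbb{P}_{M\sim\widehat{\rho}_{\lambda^*}}(M\in\mathcal{M}_n)]\ge1-\varepsilon_n$, and the stated limit follows from the assumption $\varepsilon_n\to 0$.

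The main piece of bookkeeping, and the only delicate point, is aligning constants: as $\mathcal{M}_n$ is written with exactly the bound $B_n$ that controls the posterior mean, a naive Markov step only controls the posterior mass outside a set inflated by a multiplicative factor. The standard workaround is to reapply the PAC-Bayes inequality with $\varepsilon_n$ replaced by $\varepsilon_n/c$ for a suitable $c$, absorbing the extra factor into the $\log(2/\varepsilon_n)$ term in the definition of $\mathcal{M}_n$; this modification does not affect the asymptotic conclusion. Apart from this, the proof is essentially a mechanical lift of the in-expectation statement of Theorem~\ref{thrm_blr} to an in-posterior-probability statement.
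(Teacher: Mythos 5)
Your overall plan---lifting the in-mean bound of Theorem~\ref{thrm_blr} to an in-posterior-probability statement---aims at the right target, but the route through Markov's inequality cannot reach the theorem as stated, and the fix you sketch does not repair it. From the integrated bound $\int\|\Pi_C(XMZ)-XM^*Z\|_F^2\,\widehat{\rho}_{\lambda^*}({\rm d}M)\le B_n$, Markov under $\widehat{\rho}_{\lambda^*}$ only gives that the posterior mass of $\{M:\|\Pi_C(XMZ)-XM^*Z\|_F^2> B_n/\eta\}$ is at most $\eta$; to push this mass down to (a fraction of) $\varepsilon_n$ you must inflate the radius by the multiplicative factor $1/\varepsilon_n$. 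That factor cannot be absorbed into the additive $\log\frac{2}{\varepsilon_n}$ term of $\mathcal{M}_n$: the leading contributions to $B_n$ (the term $(1+\delta)\|X\bar{M}Z-XM^*Z\|_F^2$, the $k+p$ term, the rank term) do not depend on $\varepsilon$, so $B_n(\varepsilon_n/c)/\varepsilon_n$ is of order $B_n(\varepsilon_n)/\varepsilon_n$, which is much larger than $B_n(\varepsilon_n)$ for any fixed $c$. Rerunning the sample-probability argument with $\varepsilon_n/c$ in place of $\varepsilon_n$ changes only a logarithmic term, never a multiplicative factor of the radius, so the bookkeeping step you describe as ``standard'' is exactly where the argument breaks.

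The paper avoids this loss by never passing to the posterior mean in the first place. Inequality \eqref{lemma:exponential:1_blr} of Lemma~\ref{lemma:exponential_blr} is an expectation, over both the sample and $M\sim\widehat{\rho}_{\lambda}$, of the exponential of a quantity depending on the \emph{individual} $M$; applying Chernoff's trick $\exp(x)\ge\mathbf{1}_{\mathbb{R}_{+}}(x)$ at that stage, before any Jensen step, directly bounds $\mathbb{E}\bigl[\mathbb{P}_{M\sim\widehat{\rho}_{\lambda}}(M\notin\mathcal{A}_n)\bigr]$ by $\varepsilon_n/2$, where $\mathcal{A}_n$ is the set of matrices on which $\alpha\bigl(R(M)-R(M^*)\bigr)\le\lambda\bigl(r(M)-r(M^*)\bigr)+\log\bigl[\frac{d\widehat{\rho}_{\lambda}}{d\pi}(M)\bigr]+\log\frac{2}{\varepsilon_n}$. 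The explicit form of $\widehat{\rho}_{\lambda}$ together with Donsker--Varadhan (Lemma~\ref{lemma:dv}) turns the right-hand side into the variational infimum over $\rho$, and the second inequality \eqref{lemma:exponential:2_blr}, holding uniformly in $\rho$ on an event of sample probability $1-\varepsilon_n/2$, converts empirical risk back to theoretical risk; a union bound and the same explicit computations as in Theorem~\ref{thrm_blr} then place every $M\in\mathcal{A}_n$ inside $\mathcal{M}_n$. This pointwise-in-$M$ Chernoff argument is the missing idea: the integrated bound plus Markov is strictly weaker and only yields contraction at a radius inflated by $1/\varepsilon_n$.
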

The proof of this theorem is provided in the appendix \ref{sc_appendix_proof}.

\section{Inductive matrix completion}
\label{sc_imc}
\subsection{Model and method}
In the context of inductive matrix completion, given two side information matrices $ X$ and $Z $, we assume that only a random subset $ \Omega  $ of the entries of $ Y $ in model \eqref{model_blr} is observed. More precisely, we assume that we observe $m$ i.i.d random pairs $ (\mathcal{I}_1, Y_1),
\ldots, (\mathcal{I}_m, Y_m) $ given by
\begin{equation}
\label{main_model}
Y_i = (XM^*Z)_{\mathcal{I}_i} + \mathcal{E}_i, \quad i = 1, \ldots, m
\end{equation}
where $ M^* \in \mathbb{R}^{p\times k} $ is the unknown parameter matrix expected to be low-rank. The noise variables $ \mathcal{E}_i $ are assumed to be independent with
$ \mathbb{E} (\mathcal{E}_i) = 0. $ The variables $ \mathcal{I}_i $ are i.i.d copies of a random variable $ \mathcal{I} $ having distribution $ \Pi $ on the set
$ \lbrace 1, \ldots, n \rbrace \times \lbrace 1, \ldots, q \rbrace $, we denote $ \Pi_{x,y} := \Pi(\mathcal{I} = (x,y)) $.

When $ p=n, k=q $ and that $X=\mathbf{I}_n, Z= \mathbf{I}_{q} $ are the identity matrices, we recover the matrix completion as a special case~\cite{koltchinskii2011nuclear}; whereas when $ k=q $ and that $Z= \mathbf{I}_{q} $ is the identity matrix, we recover the reduced rank regression problem with incomplete response as a special case \cite{luo2018leveraging,mai2022optimal}. However, in the context of inductive matrix completion, we focus in the cases where $X$ and $ Z $ contains explanatory variables. In order words, the side information from $ n $ users and the side information from the $ q $ items will be taken into account in this model \cite{zilber2022inductive}.

It is also noted that there are two approaches can be used to model the observed entries of $Y$: with, or without replacement. For matrix completion, both had been studied, see for example \cite{candes2010matrix} for the case without replacement and \cite{koltchinskii2011nuclear} with replacement. Both settings have their own practical applications, and the same estimation methods are used in both cases. In this paper, we study the case for i.i.d variables $ \mathcal{I}_i $, which means that it is possible to observe the same entry multiple times. Note that, from the results in Section 6 of~\cite{hoeffding1963probability}, our results can be extended directly to the case of sampling $\mathcal{I}_i$ without replacement on the condition that they are sampled uniformly, and that there is no observation noise: $\mathcal{E}_i=0$.

We are now adapting the quasi-Bayesian approach for bilinear regression in Section \ref{sc_blr} to the context of inductive matrix completion. For a probability distribution $P$ on $\{1,\dots,n_1\} \times\{1,\dots,n_2\}$, we generalize the Frobenius norm by $ \|A\|_{F,P}^2 = \sum_{i,j} P[(i,j)] A_{i,j}^2 $; note that when $P$ is the uniform distribution, then $ \|A\|_{F,P}^2 = \|A\|_{F}^2 /(n_1 n_2) $. 

For a matrix $ M\in \mathbb{R}^{p\times k} $, we denote the empirical risk of $M$, $ r'(M) $, and its expected risk $ R'(M)  $ respectively as
$$
r'(M) 
= 
\frac{1}{m}
\sum_{i = 1}^{m} \left( Y_i - (\Pi_C(X MZ))_{\mathcal{I}_i} \right) ^2,
\quad
R'(M) = \mathbb{E} [ r'(M)  ]
= 
\mathbb{E} [  \left( Y_1 - (\Pi_C(XMZ))_{\mathcal{I}_1} \right)^2  ] .
$$
As in Section \ref{sc_blr}, we will focus on the predictive aspects of the model that is a matrix $M$ predicts almost as well as $M^*$ if $ R'(M) - R'(M^*) $ is small. Under the assumption that $\mathcal{E}_i$ has a finite variance, thanks to the Pythagorean theorem, we have
\begin{equation}
\label{pyta_imc}
R'(M) - R'(M^*) =  \| \Pi_C(XMZ) - XM^*Z  \|^2_{F,\Pi}
\end{equation}
for any $ M $, which means that our results can also be interpreted in terms of estimation of $M^*$ with respect to a generalized Frobenius norm.

Here, the prior $\pi$ is the low-rank inducing prior specified in the Subsection \ref{sc_priors} above.
For any $\lambda>0$, we define the quasi-posterior
\begin{equation*}
\hat{\rho}'_{\lambda} (dM) 
\propto 
\exp(-\lambda r'(M)) \pi (dM).
\end{equation*}
We will actually specify our choice of $\lambda$ below.

The truncated posterior mean of $XMZ $ is given by
\begin{equation}
\label{equat_estimator_imc}
\widehat{XMZ }_{\lambda} 
= 
\int \Pi_C(XM) \hat{\rho}'_{\lambda}(d M).
\end{equation}

\noindent Here, for the same technical reasons as in the context of bilinear regression problem, this truncation has a very little impact in practice for reasonable values of $ C $.

\subsection{Theoretical results}

In this section, we derive the statistical properties of the posterior $\hat{\rho}'_\lambda$ and the mean estimator $\widehat{XMZ}_\lambda$ for the context of inductive matrix completion. Let us first state our assumptions on this model.
\begin{assume}
	\label{assum_noise}
	The noise variables  $ \mathcal{E}_1, \ldots, \mathcal{E}_m $ are independent and independent of $\mathcal{I}_1,\ldots,\mathcal{I}_{m}$. There exist two known constants $\sigma' >0 $ and $\xi' >0 $ such that
	$ \forall k\geq 2,\, \mathbb{E} (|\mathcal{E}_{i}|^{k}) \leq \sigma'^{2} k! \xi'^{k-2}/2.$
\end{assume}

\noindent Assumption \ref{assum_bounded} and \ref{assum_noise} are both standard, they have been used in \cite{luo2018leveraging} for theoretical analysis of reduced rank regression and in \cite{koltchinskii2011nuclear} for trace regression and matrix completion.

Let us put 
$
C'_1 = 8(\sigma'^2 +C^2) ;\, C'_2 = 64C \max(\xi',C);\,
\tau^* = \sqrt{C'_1 (k+p)/(m k p \|X \|_{F}^2 \|Z \|_{F}^2 )}. 
$

\begin{theorem}
\label{thrm_main} 
Let Assumptions \ref{assum_bounded} and \ref{assum_noise} be satisfied. Fix the parameter $\tau= \tau^* $ in the prior. Fix $\delta>0$ and define $\lambda'^* := m \min(1/(2C'_2), \delta/[C'_1(1+\delta)] ) $. Then, for any $\varepsilon\in(0,1)$, we have, with probability at least $1-\varepsilon$ on the sample,
\begin{multline*}
 \left\|\widehat{XMZ}_{\lambda'^*} -XM^*Z \right\|_{F,\Pi}^2
\leq 
\inf_{0\leq r \leq pk} \inf_{
\begin{tiny}
\begin{array}{c}
\bar{M}\in\mathbb{R}^{p\times k}
\\
{\rm rank}(\bar{M}) \leq r
\end{array}
\end{tiny}
} \Biggl\{ (1+\delta) \|X\bar{M}Z - X M^*Z \|_{F,\Pi}^2
+
\\ 
 \frac{C'_1 (1+\delta)^2}{\delta} \frac{ \left( 4 r (k+p+2) \log \left( 1+\frac{\| X \|_F \|Z \|_F \| \bar{ M } \|_F}{ \sqrt{C_1}} 
 	\sqrt{ \frac{mkp}{r(k+p)}} \right) 
 	+
 	k+p + 2 \log \frac{2}{\varepsilon} \right)}{ m} \Biggr\}.
\end{multline*}
\end{theorem}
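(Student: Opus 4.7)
The proof will follow the same PAC-Bayes strategy as in Theorem~\ref{thrm_blr}, with the Frobenius-type empirical process replaced by an empirical mean of $m$ i.i.d.\ sub-exponential terms. The key object is, for each $M$, the centered increment $T_i(M) := (Y_i - (\Pi_C(XMZ))_{\mathcal{I}_i})^2 - (Y_i - (XM^*Z)_{\mathcal{I}_i})^2$, so that $r'(M) - r'(M^*) = \frac{1}{m}\sum_{i=1}^m T_i(M)$ and $R'(M)-R'(M^*) = \mathbb{E}[T_1(M)] = \|\Pi_C(XMZ) - XM^*Z\|_{F,\Pi}^2$ by~\eqref{pyta_imc}. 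Under Assumptions~\ref{assum_bounded} and~\ref{assum_noise}, each $T_i(M)$ satisfies a Bernstein moment bound with scale parameters controlled by $C'_1$ and $C'_2$, and variance bounded, up to a constant factor, by $C'_1 \|\Pi_C(XMZ) - XM^*Z\|_{F,\Pi}^2$. The first step is therefore to establish the exponential moment inequality
\[
\mathbb{E}\exp\!\bigl(\lambda\bigl[(R'(M)-R'(M^*)) - (r'(M)-r'(M^*))\bigr]\bigr)
\leq
\exp\!\Bigl(\tfrac{\lambda^2 C'_1}{2m}\,\|\Pi_C(XMZ)-XM^*Z\|_{F,\Pi}^2\Bigr)
\]
for all $\lambda \in (0, m/C'_2]$, together with its symmetric counterpart swapping the roles of $R'-R'(M^*)$ and $r'-r'(M^*)$.

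Next I would apply the Donsker-Varadhan PAC-Bayes lemma to both inequalities. Integrating the exponential moment bound against the prior $\pi$ and using the variational representation of the Kullback-Leibler divergence yields, for any $\rho \ll \pi$ and with probability at least $1-\varepsilon$,
\[
\int [R'(M)-R'(M^*)]\,\rho(dM)
\leq
(1+\delta)\!\int [r'(M)-r'(M^*)]\,\rho(dM)
+
\frac{C'_1(1+\delta)^2}{\delta\, m}\Bigl[KL(\rho\|\pi) + \log\tfrac{2}{\varepsilon}\Bigr],
\]
and the reverse bound with $R'$ and $r'$ interchanged; the calibration $\lambda'^* = m\min(1/(2C'_2),\delta/[C'_1(1+\delta)])$ is precisely what makes the quadratic variance term collapse into the $(1+\delta)$-relative factor on the left. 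Evaluating the first inequality at $\rho=\hat{\rho}'_{\lambda'^*}$, using that $\hat{\rho}'_{\lambda'^*}$ is the unique minimizer of $\rho \mapsto \int r'(M)\rho(dM) + KL(\rho\|\pi)/\lambda'^*$, and then applying the reverse inequality at an arbitrary candidate $\rho$, I will obtain
\[
\int [R'(M)-R'(M^*)]\,\hat{\rho}'_{\lambda'^*}(dM)
\leq
(1+\delta)\!\int [R'(M)-R'(M^*)]\,\rho(dM)
+
\frac{2C'_1(1+\delta)^2}{\delta\, m}\Bigl[KL(\rho\|\pi) + \log\tfrac{2}{\varepsilon}\Bigr].
\]
Jensen's inequality, using convexity of $\|\cdot\|_{F,\Pi}^2$ and nonexpansiveness of $\Pi_C$, will then convert the left-hand side into an upper bound on $\|\widehat{XMZ}_{\lambda'^*} - XM^*Z\|_{F,\Pi}^2$.

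The final step is to choose $\rho$ cleverly and evaluate the KL divergence. Following the strategy used for the spectral scaled Student prior in \cite{mai2022optimal} and in the proof of Theorem~\ref{thrm_blr}, for each $\bar{M}$ of rank at most $r$ I plan to take $\rho$ to be the translate of $\pi$ restricted to a small Frobenius ball of radius proportional to $\tau^*$ around $\bar{M}$. Two quantities then need to be controlled: first, the expected approximation error $\int \|X\bar{M}Z - XMZ\|_{F,\Pi}^2\,\rho(dM)$, which is of order $\tau^{*2}\|X\|_F^2\|Z\|_F^2 pk$ and, thanks to the calibration $\tau^* = \sqrt{C'_1(k+p)/(mkp\|X\|_F^2\|Z\|_F^2)}$, is absorbed into the $(k+p)$ summand of the stated bound; second, the Kullback-Leibler divergence, for which the determinantal form of $\pi$ produces the estimate $KL(\rho\|\pi) \leq 2r(k+p+2)\log\!\bigl(1 + \|X\|_F\|Z\|_F\|\bar{M}\|_F \sqrt{mkp/(r(k+p))}/\sqrt{C'_1}\bigr) + (k+p)/2$. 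Substituting these two bounds into the oracle inequality above and taking the infimum over $\bar{M}$ and $r$ delivers the announced conclusion. The main technical obstacle will be the KL estimate itself, since the scaled Student prior is not conjugate and the localization around a low-rank target has to be done carefully to keep the logarithmic dependence; fortunately this calculation is essentially the one already performed for Theorem~\ref{thrm_blr}, and the only substantive change is that the entry-wise noise control is replaced by the i.i.d.\ Bernstein bound derived in the first step.
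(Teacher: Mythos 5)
Your proposal follows essentially the same route as the paper's proof: a Bernstein-type exponential moment bound for the i.i.d.\ increments $T_i(M)$, Donsker--Varadhan and Chernoff to obtain the two-sided PAC-Bayes oracle inequality with the $(1+\delta)$ calibration of $\lambda'^*$, Jensen plus nonexpansiveness of $\Pi_C$ to pass to the posterior mean, and finally evaluation at a translate of the scaled Student prior around a rank-$r$ candidate $\bar M$ using Dalalyan's KL and second-moment lemmas. The only cosmetic deviation is that you restrict the translated prior to a small ball, whereas the paper uses the unrestricted translate $\rho_{\bar M}(M)=\pi(\bar M - M)$ (the extra $(k+p)$ summand then comes entirely from the $\tau^{*2}\|X\|_F^2\|Z\|_F^2 pk$ approximation term, not from the KL bound); this does not change the argument or the resulting rate.
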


Similar to the context bilinear regression, the choices of $ \lambda = \lambda^*, \tau= \tau^* $ come from the optimization of a upper bound on the risk $ R $ (in the proof of this theorem). Therefore these choices may not be necessarily the best choice in practice even though it gives the good order of magnitude for tuning these parameters. The user could use cross-validation to properly tune them in practice.
Note again that ${\rm rank} (\bar{M})\neq 0$ is not required in the above formula, if ${\rm rank} (\bar{M})=0$ then $\bar{M}=0$ and we interpret $0\log(1+0/0)$ as $0$. The proof of this theorem is provided in the appendix \ref{sc_appendix_proof}.
In particular, we can upper bound the infimum on $\bar{M}$ by taking $\bar{M} = M^*$, which leads to the following result.

\begin{corollary}
Under the assumptions that Theorem~\ref{thrm_main} holds, let $r^*={\rm rank}(M^*)$. Put
$$
R_{\delta,m,p,k,r^*,\varepsilon}
 : =   
 \frac{C'_1 (1+\delta)^2}{\delta} \frac{ \left( 4 r (k+p+2) \log \left( 1+\frac{\| X \|_F \|Z \|_F \| \bar{ M } \|_F}{ \sqrt{C_1}} 
	\sqrt{ \frac{mkp}{r(k+p)}} \right) +k+p + 2 \log \frac{2}{\varepsilon} \right)}{ m},
$$
 then 
\begin{align*}
 \left\|\widehat{XMZ}_{\lambda'^*} -XM^*Z \right\|_{F,\Pi}^2
\leq 
R_{\delta,m,p,k,r^*,\varepsilon} 
\end{align*}
and in particular, if the sampling distribution $\Pi$ is uniform,
\begin{align*}
 \dfrac{ \| \widehat{XMZ}_{\lambda'^*} - XM^*Z \|^2_{F} }{ nq}
\leq 
R_{\delta,m,p,k,r^*,\varepsilon} .
\end{align*}
\end{corollary}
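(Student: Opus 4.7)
The plan is to derive both assertions as immediate specializations of Theorem~\ref{thrm_main}, with no additional probabilistic machinery required.

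For the first claim, I would evaluate the double infimum on the right-hand side of Theorem~\ref{thrm_main} at the specific pair $r = r^* := {\rm rank}(M^*)$ and $\bar{M} = M^*$. This pair is feasible since ${\rm rank}(M^*) \leq r^*$ and $r^* \leq pk$. With this substitution the approximation-error term $(1+\delta)\|X\bar{M}Z - XM^*Z\|_{F,\Pi}^2$ collapses to zero (there is no bias when we approximate $M^*$ by itself), so what remains on the right-hand side is exactly the complexity term with $r$ and $\bar{M}$ replaced by $r^*$ and $M^*$. Reading the symbol $\bar{M}$ appearing inside the defining formula of $R_{\delta,m,p,k,r^*,\varepsilon}$ as the specialized $M^*$, this is precisely the quantity $R_{\delta,m,p,k,r^*,\varepsilon}$, and the bound holds on the same event of probability at least $1-\varepsilon$ provided by Theorem~\ref{thrm_main}.

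For the second claim, I would only need the definition of the weighted Frobenius norm. When $\Pi$ is uniform on $\{1,\ldots,n\}\times\{1,\ldots,q\}$, we have $\Pi[(i,j)] = 1/(nq)$ for every entry, so for any $A \in \mathbb{R}^{n\times q}$,
\[
\|A\|_{F,\Pi}^2 = \sum_{i,j} \frac{1}{nq}\, A_{i,j}^2 = \frac{1}{nq}\|A\|_F^2.
\]
Applying this identity with $A = \widehat{XMZ}_{\lambda'^*} - XM^*Z$ converts the first inequality into the stated one involving the unweighted Frobenius norm divided by $nq$.

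I do not anticipate any genuine obstacle here: the corollary is purely a specialization of the oracle inequality, combined with a trivial identification of norms under uniform sampling. The only mildly delicate point is notational: one must read the $\bar{M}$ still appearing inside $R_{\delta,m,p,k,r^*,\varepsilon}$ as its specialized value $M^*$, and the $r$ inside that formula as $r^*$, to match the subscript of $R$. Once these identifications are made, the two inequalities follow from Theorem~\ref{thrm_main} in a single line each.
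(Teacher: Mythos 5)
Your proposal is correct and matches the paper's own (implicit) argument exactly: the paper likewise obtains the corollary by specializing the infimum in Theorem~\ref{thrm_main} to $\bar{M}=M^*$ and $r=r^*$ so that the bias term vanishes, and the uniform-sampling statement follows from the identity $\|A\|_{F,\Pi}^2=\|A\|_F^2/(nq)$ already noted when the generalized Frobenius norm was introduced. Nothing further is needed.
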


\begin{remark}
Up to a log-term, our error rate $ r(k+p)/m $ is similar to the best known up-to-date rate derived in \cite{zilber2022inductive}.
\end{remark}

While Theorem~\ref{thrm_main} is about the finite sample convergence rate of the posterior mean, it is actually possible to prove that the quasi-posterior $\hat{\rho}'_\lambda$ contracts around $M^*$ at the same rate.
\begin{theorem}
\label{thrm_contraction} 
Under the same assumptions for Theorem~\ref{thrm_main}, and the same definition for $\tau$ and $\lambda^*$, let $\varepsilon_m $ be any sequence in $(0,1)$ such that $\varepsilon_m \rightarrow 0$ when $ m \rightarrow\infty $. Define
\begin{multline*}
\mathcal{E}_m 
 = 
 \Biggl\{ M \in \mathbb{R}^{p\times k}: \| \Pi_C(XMZ) - XM^*Z  \|^2_{F,\Pi} 
\leq
\inf_{1\leq r \leq pk} \inf_{
\begin{tiny}
\begin{array}{c}
\bar{M}\in\mathbb{R}^{p\times k}
\\
{\rm rank}(\bar{M}) \leq r
\end{array}
\end{tiny}
} \Biggl[ (1+\delta) \|X\bar{M}Z - X M^*Z \|_{F,\Pi}^2 +
\\ 
\frac{C'_1 (1+\delta)^2}{\delta} \frac{ \left( 4 r (k+p+2) \log \left( 1+\frac{\| X \|_F \|Z \|_F \| \bar{ M } \|_F}{ \sqrt{C_1}}
	 \sqrt{ \frac{mkp}{r(k+p)}} \right) +k+p + 2 \log \frac{2}{\varepsilon_m} \right)}{ m} \Biggr] \Biggr\}.
\end{multline*}
Then
$$ \mathbb{E} \Bigl[ \mathbb{P}_{M\sim \hat{\rho}'_{\lambda}} (M\in\mathcal{E}_m) \Bigr] 
\geq 
1-\varepsilon_m \xrightarrow[m \rightarrow \infty]{} 1.  $$
\end{theorem}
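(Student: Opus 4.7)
The plan is to follow the PAC-Bayes strategy from the proof of Theorem~\ref{thrm_main}, but to control the quasi-posterior mass of $\mathcal{E}_m^c$ directly rather than pass to the posterior mean via Jensen's inequality.

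First, I would extract from the proof of Theorem~\ref{thrm_main} the following intermediate PAC-Bayes inequality, valid for every probability measure $\rho \ll \pi$: with probability at least $1-\varepsilon$ over the sample, for any fixed test measure $\rho_0$,
$$\int \bigl(R'(M)-R'(M^*)\bigr) \, d\rho(M) \leq (1+\delta) \int \bigl(R'(M)-R'(M^*)\bigr) \, d\rho_0(M) + \frac{C'_1(1+\delta)^2}{\delta} \cdot \frac{\mathrm{KL}(\rho\|\pi) + \log(2/\varepsilon)}{\lambda}.$$
Theorem~\ref{thrm_main} itself is recovered by taking $\rho = \hat{\rho}'_\lambda$, choosing $\rho_0$ concentrated on a low-rank approximation of $M^*$ to balance the KL term against the approximation term, and finally applying Jensen's inequality to the convex squared Frobenius norm (using $\widehat{XMZ}_{\lambda'^*} = \int \Pi_C(XMZ) \, d\hat{\rho}'_\lambda$ together with \eqref{pyta_imc}).

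For contraction, I would apply the intermediate inequality with $\varepsilon = \varepsilon_m$ and with $\rho$ taken to be the quasi-posterior $\hat{\rho}'_\lambda$ restricted to $\mathcal{E}_m^c$ and renormalized. By the very definition of $\mathcal{E}_m^c$ the left-hand side strictly exceeds $B(\varepsilon_m)$, the right-hand side of Theorem~\ref{thrm_main} evaluated at $\varepsilon_m$. Using the identity
$$\mathrm{KL}\bigl(\hat{\rho}'_\lambda|_{\mathcal{E}_m^c} \, \big\|\, \pi\bigr) = -\log \hat{\rho}'_\lambda(\mathcal{E}_m^c) + \mathrm{KL}(\hat{\rho}'_\lambda \| \pi) + \lambda \int r'(M) \, d\bigl(\hat{\rho}'_\lambda - \hat{\rho}'_\lambda|_{\mathcal{E}_m^c}\bigr)(M),$$
one isolates the term $-\log \hat{\rho}'_\lambda(\mathcal{E}_m^c)$ on the right side of the PAC-Bayes inequality. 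On the good event (probability at least $1-\varepsilon_m$), rearranging yields $-\log \hat{\rho}'_\lambda(\mathcal{E}_m^c) \gtrsim \log(1/\varepsilon_m)$, i.e.\ $\hat{\rho}'_\lambda(\mathcal{E}_m^c) \leq \varepsilon_m$; taking expectation over the sample and accounting for the complementary event of probability $\leq \varepsilon_m$ gives the claim.

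The main obstacle is the constant tracking in the last step: the $B(\varepsilon_m)$ gap on the left must dominate both the oracle approximation term on the right (which has the same functional form appearing in the definition of $\mathcal{E}_m$) and the correction terms coming from $\mathrm{KL}(\hat{\rho}'_\lambda\|\pi)$ and from $\lambda \int r' \, d(\hat{\rho}'_\lambda - \hat{\rho}'_\lambda|_{\mathcal{E}_m^c})$, so that the isolated quantity $-\log \hat{\rho}'_\lambda(\mathcal{E}_m^c)$ is genuinely of order $\log(1/\varepsilon_m)$ rather than a weaker rate. A naive Markov-in-the-posterior argument applied directly to the integral bound $\int (R'-R'(M^*)) \, d\hat{\rho}'_\lambda \leq B(\varepsilon_m)$ would only yield the trivial bound $\hat{\rho}'_\lambda(\mathcal{E}_m^c) \leq 1$, which is why one has to pass through the restricted measure and the exponential (Gibbs) structure of $\hat{\rho}'_\lambda$.
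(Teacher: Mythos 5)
Your overall strategy (restrict the Gibbs posterior to $\mathcal{E}_m^c$, use the identity relating $\mathcal{K}(\hat{\rho}'_{\lambda}|_{\mathcal{E}_m^c},\pi)$ to $-\log \hat{\rho}'_{\lambda}(\mathcal{E}_m^c)$, and extract a bound on the posterior mass of the bad set) is a legitimate alternative route, and the KL identity you write down is correct. But the obstacle you flag at the end is not a matter of careful bookkeeping --- it is fatal for the statement as written. Carrying your computation through, the uniform-over-$\rho$ PAC-Bayes inequality combined with your identity and Donsker--Varadhan gives, on the good event,
$$
\alpha' \int \bigl(R'-R'(M^*)\bigr)\, d\hat{\rho}'_{\lambda}|_{\mathcal{E}_m^c}
\;\leq\;
\inf_{\rho_0}\Bigl\{\beta' \int \bigl(R'-R'(M^*)\bigr)\,d\rho_0 + 2\mathcal{K}(\rho_0,\pi)\Bigr\} + 2\log\tfrac{2}{\varepsilon_m} - \log \hat{\rho}'_{\lambda}(\mathcal{E}_m^c).
$$
The left-hand side exceeds $\alpha'$ times the radius $B(\varepsilon_m)$ of $\mathcal{E}_m$ by definition of $\mathcal{E}_m^c$, but the first two terms on the right equal $\alpha' B(\varepsilon_m)$: the radius already contains the \emph{entire} $2\log(2/\varepsilon_m)$ budget inherited from Theorem~\ref{thrm_main}. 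Rearranging therefore yields only $-\log\hat{\rho}'_{\lambda}(\mathcal{E}_m^c)>0$, i.e.\ the trivial bound $\hat{\rho}'_{\lambda}(\mathcal{E}_m^c)<1$. To conclude $\hat{\rho}'_{\lambda}(\mathcal{E}_m^c)\leq\varepsilon_m$ you would need to enlarge the radius by an extra $\tfrac{C'_1(1+\delta)^2}{\delta m}\log(1/\varepsilon_m)$, i.e.\ prove a weaker theorem with $3\log(2/\varepsilon_m)$ in place of $2\log(2/\varepsilon_m)$. A secondary issue: the ``intermediate inequality'' you display is false for arbitrary $\rho$ --- replacing $\lambda\int(r'-r'(M^*))d\rho$ by the $\rho_0$ oracle term uses the variational optimality of the Gibbs measure and is valid only for $\rho=\hat{\rho}'_{\lambda}$; your later use of the restricted-measure identity implicitly repairs this, but the inequality as stated is not a correct starting point.

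The paper proves the exact statement by a different accounting that never bounds $\hat{\rho}'_{\lambda}(\mathcal{E}_m^c)$ almost surely. It applies Chernoff's trick \emph{pointwise in $M$} inside the exponential moment bound of Lemma~\ref{lemma:exponential}, which controls the joint sample--posterior probability of a random set $\mathcal{A}_m$ of parameters, $\mathbb{E}[\mathbb{P}_{M\sim\hat{\rho}'_{\lambda}}(M\in\mathcal{A}_m)]\geq 1-\varepsilon_m/2$. For $M\in\mathcal{A}_m$, the Gibbs form $\log\frac{d\hat{\rho}'_{\lambda}}{d\pi}(M)=-\lambda r'(M)-\log\int e^{-\lambda r'}d\pi$ cancels the $M$-dependent empirical term, and Donsker--Varadhan turns the remainder into $\inf_{\rho}\{\lambda(\int r'd\rho-r'(M^*))+\mathcal{K}(\rho,\pi)\}+\log(2/\varepsilon_m)$; intersecting with the event $\mathcal{B}_m$ from the second exponential inequality then lands every such $M$ inside $\mathcal{E}_m$ with the same $2\log(2/\varepsilon_m)$ constant as Theorem~\ref{thrm_main}. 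It is precisely the averaged form of the conclusion, $\mathbb{E}[\mathbb{P}_{M\sim\hat{\rho}'_{\lambda}}(\cdot)]$, that lets the paper keep that constant; your approach would either have to adopt this joint-probability formulation or accept a larger set $\mathcal{E}_m$.
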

The proof of this theorem is provided in the appendix \ref{sc_appendix_proof}.

\section{Numerical studies}
\label{sc_numstudy}

\subsection{Langevin Monte Carlo implementation}
In this section, we propose to sample from the (quasi) posterior, in Section \ref{sc_blr} and \ref{sc_imc}, by a suitable version of the Langevin Monte Carlo (LMC) algorithm, a gradient-based sampling method. We propose use a constant step-size unadjusted LMC algorithm, see \cite{durmus2019high} for more details. The algorithm is given by an initial matrix $M_0$ and the recursion
\begin{equation}
\label{langevinMC}
M_{k+1}  =  M_{k} - h\nabla \log    \widehat{\rho}_{\lambda}(M_k) +\sqrt{2h}\,N_k\qquad k=0,1,\ldots
\end{equation}
where $h>0$ is the step-size and $ N_0, N_1,\ldots$ are independent random matrices with i.i.d standard Gaussian entries. We provide a pseudo-code for LMC in Algorithm \ref{lmc_algorithm}. For small values of the step-size $ h $, the output of Algorithm \ref{lmc_algorithm}, $ \widehat{M} $,   is very close to the integral \eqref{equat_estimator} of interest. However, for some $h$ that may not small enough, the sum can explode \cite{roberts2002langevin}. In such case, we consider to include a Metropolis–Hastings correction in the algorithm. Another possible choice is to take a smaller $ h $ and restart the algorithm, although it slows down the algorithm but we keep some control on its time of execution. On the other hand, the Metropolis–Hastings approach ensures the convergence to the desired distribution, however, the algorithm is greatly slowed down because of an additional acception/rejection step at each iteration.

Next, we propose a Metropolis-Hasting correction to the LMC algorithm. It guarantees the convergence to the (quasi) posterior and it also provides a useful way for choosing $ h $. More precisely, we consider the update rule in \eqref{langevinMC} as a proposal for a new candidate,
\begin{align}
\tilde{M}_{k+1} = M_{k} - h\nabla \log    \widehat{\rho}_{\lambda}  (M_k) +\sqrt{2h}\,N_k,\qquad
k=0,1,\ldots,
\label{mala}
\end{align}
Note that the matrix $\tilde{M}_{k+1} $ is normally distributed with mean $ M_{k} - h\nabla \log    \widehat{\rho}_{\lambda}  (M_k) $ and the covariance matrices equal to $ 2h $ times the identity matrices. This proposal is then accepted or rejected according to the Metropolis-Hastings algorithm that the proposal is accepted with probability:
\begin{equation}
\label{eq_ac_mala}
A_{MALA} 
:=
\min \left\lbrace 1,  \frac{  \widehat{\rho}_{\lambda}  (\tilde{M}_{k+1}) q(M_k | \tilde{M}_{k+1}) }{
	\widehat{\rho}_{\lambda}  (M_k ) q(\tilde{M}_{k+1} | M_k ) } \right\rbrace,
\end{equation}
where 
$$
q(x' | x) \propto \exp \left(-\frac{1}{4h}\|x'-x + h\nabla \log    \widehat{\rho}_{\lambda}  (x) \|^2_F \right)
$$
is the transition probability density from $x$ to $x'$. The details of the Metropolis-adjusted Langevin algorithm (denoted by MALA) are presented Algorithm \ref{mala_algoritm}. Compared to random-walk Metropolis–Hastings, MALA usually proposes moves into regions of higher probability, which are then more likely to be accepted.

We note that the step-size $h$ for MALA is chosen such that the acceptance rate is approximate $0.5$ following \cite{roberts1998optimal}, while the step-size for LMC in the same setting should be smaller than the one for MALA \cite{dalalyan2017theoretical}.

\subsection{Simulation studies for biliear regression}
\label{sc_simu_blr}

We perform some numerical studies on simulated data to access the performance of our proposed algorithms. All
simulations were conducted using the \texttt{R} statistical software \cite{rsoftware}.

For fixed dimensions $ q=10, k =20 $ of the data, we vary $ n =100 $ and $ n=1000 $ to check the effect of the samples whereas the dimensions of the coefficient matrix are varied by $ p=10 $ and $ p=100 $. The entries of the design matrices $X $ and $ Z $ are independently simulated from the standard Gaussian $ \mathcal{N}(0, 1) $. Then, given a matrix $ M^* $, we simulate the response matrix $ Y $ from model \eqref{model_blr} whose entries of the noise matrix $ E $ are i.i.d sampled from $ \mathcal{N}(0, 1) $.  We consider the following setups for the true coefficient matrix:
\begin{itemize}
	\item Model I: The true coefficient matrix $ M^* $ is a rank-2 matrix that is generated as $ M^* = B_1B_2^\top$ where $B_1 \in \mathbb{R}^{p\times 2}, B_2 \in \mathbb{R}^{k\times 2} $ and all entries in $ B_1 $ and $B_2$ are i.i.d sampled from $ \mathcal{N}(0, 1) $.
	
	\item Model II: An approximate low-rank set up is studied. This series of simulation is similar to the Model I, except that the true coefficient is no longer rank 2, but it can be well approximated by a rank 2 matrix:
	$$
	M^* = 2\cdot B_1B_2^\top + U,
	$$
	where $U $ is a matrix whose entries are i.i.d sampled from $ \mathcal{N}(0, 0.1) $.
\end{itemize}

We compare our approaches denoted by LMC and MALA against the (generalized) ordinary least square \cite{von2018bilinear}, denoted by OLS. The OLS is defined as follow
$$
\hat{M}_{\rm OLS} = (X^\top X)^\dagger X^\top Y Z^\top (ZZ^\top )^\dagger
$$ 
where $ A^\dagger $ denotes the Moore-Penrose matrix pseudo-inversion. We fixed $ \lambda = nq, \tau =1 $ and the LMC and MALA methods are initiated at the OLS estimator and are run with 10000 iterations where the first 1000 steps are removed as burn-in periods.

The evaluations are done by using the mean squared estimation error (Est) and the normalized (relative) mean square error (Nmse)
$$
{\rm Est} := \| \hat{M} -M^* \|^2_F/ (pk),
\quad
{\rm Nmse}:= \| \hat{M} -M^*\|^2_F/ \| M^* \|^2_F ,
$$
and the prediction error (Pred) as 
$$ 
{\rm Pred} := \| X(\hat{M} -M^*) Z \|^2_F/ (nq),
$$ 
where $ \hat{M} $ here is one of the estimator from LMC, MALA or OLS. We report the averages and the standard deviation of these errors over 100 data replications.

The results are given in Table \ref{tb_model1} and Table \ref{tb_model2}. We can observe that our proposed methods behave quite similar to OLS method. However, the estimation method obtained from MALA algorithm often yields smaller prediction errors, especially in high-dimensional settings. This feature is even more visible in the context of inductive matrix completion in the next subsection.

\subsection{Simulation studies for inductive matrix completion}

The simulation settings for inductive matrix completion are similar to the settings for bilinear regression, subsection \ref{sc_simu_blr}. However, after obtaining the response matrix $ Y $, we remove uniformly at random $ \kappa = 10\% $ and $ \kappa = 30\% $ of the entries of $ Y $. Here $ \kappa $ denotes the missing rate. We denote the response matrix with missing entries by $ Y_{\rm miss} $.

As in the context of inductive matrix completion, we only observe the response matrix with missing entries, $ Y_{\rm miss} $, and thus we can not construct the OLS estimator as in the case of bilinear regression. For this purpose, we first impute the missing entries in $ Y_{\rm miss} $ by using the \texttt{R} package \texttt{softImpute} \cite{softimputepackage} where the rank of $ M^* $ is specified as the true rank for matrix $ Y_{\rm miss} $. We denote the resulting imputed matrix by $ Y_{\rm imp} $.

We compare our approaches denoted by LMC and MALA against the (imputed and generalized) ordinary least square, denoted by OLS\_imp. The OLS\_imp is defined as follow
$$
\hat{M}_{\rm OLS\_imp} = (X^\top X)^\dagger X^\top Y_{\rm imp} Z^\top (ZZ^\top )^\dagger
$$ 
where $ A^\dagger $ denotes the Moore-Penrose matrix pseudo-inversion. The LMC and MALA methods are initiated at the OLS\_imp estimator and are run with 10000 iterations where the first 1000 steps are removed as burn-in periods.

As in subsection \ref{sc_simu_blr}, we report the averages and the standard deviation of the the mean squared estimation error (Est), the normalized (relative) mean square error (Nmse) and of the prediction error (Pred) over 100 data replications.

The results are given in Table \ref{tb_imc_model1} and Table \ref{tb_imc_model2}. Here, it is more clear that the results from our MALA method outperforms the other methods in term of prediction error in most considered settings. It is more significant when the missing rate in the response matrix increases. It is also noted that our MALA method robustly works well in the case of approximate low-rank setting (model II) while the OLS and LMC are not.

\section{Discussion and conclusion}
\label{sc_conclus}
In this work, we presented a unified quasi-Bayesian analysis for the problem of bilinear regression and for the problem of inductive matrix completion. We focused on filling the gaps in theoretical understanding of Bayesian-like methods in these problem. By leveraging PAC-Bayesian bound techniques, we proved some non-asymptotic results for our proposed estimator, we also derived the contraction properties for the corresponding (quasi) posteriors. 

We further propose an efficient gradient-based sampling algorithm to sample from the (quasi) posterior and to approximately compute the mean estimators. The resulting methods, denoted by LMC and MALA, were shown to work well in some simulation studies and in compared with a ordinary least squared method.

Some questions remain open: such as missing entries in the covariate matrices $ X,Z $ or relaxing the i.i.d assumption by considering models with dispersion matrix. These might be the objects of future works

\subsection*{Acknowledgments}
TTM is supported by the Norwegian Research Council, grant number 309960 through the Centre for Geophysical Forecasting at NTNU. The R codes and data used in the numerical experiments are available at:  \url{https://github.com/tienmt/blr_imc} .

\subsection*{Conflicts of interest/Competing interests}
The authors declare no potential conflict of interests.

\clearpage

\begin{table}[!ht]
	\small
	\caption{\it \small Simulation results on simulated data in Model I in inductive matrix completion, with fixed $ q=10, k =20 $, for different methods, with their standard error in parentheses over 100 replications. ($\kappa $ is the missing rate; Est: average of estimation errors; Pred: average of prediction errors; Nmse: average of normalized estimation errors).}
	\small
	\centering
	\begin{tabular}{p{1.4cm} | p{1.3cm} | p{14mm}|cccc} 
		\toprule\toprule
		&& Errors    & LMC & MALA & OLS\_imp  & 
		\\ \midrule
		&& Est 	& 1.0559 (.5060) & 1.0803 (.5122)& 1.0559 (.5060)&
		\\ 
$ n=100 $  & $ p=10 $ 
		& Pred  & 0.1028 (.0193) & 0.1082 (.0143)& \textbf{0.1020} (.0197)&
		\\ 
$ \kappa= 10\% $		&
		& Nmse  & 0.4986 (.1116) & 0.5139 (.1197)& 0.4986 (.1116)&
		\\ 
		\cline{2-7}
		&
		& Est 	& 1.4008 (.8555) & \textbf{1.3987} (.8542) & 1.4009 (.8555)& 
		\\ 
&$ p=100 $ & 
		Pred  & \textbf{1.2250} (.4568) & 1.4468 (.4137) & 1.2252 (.4570)& 
		\\ 
		&& Nmse  & 0.7148 (.3591) & \textbf{0.7136} (.3581) & 0.7148 (.3591) & 
		\\ 
		\midrule 
		\hline
		&& Est 	& 1.0432 (.4963) & 1.0917 (.5085)& 1.0432 (.4963)&
		\\ 
$ n=100 $  &$ p=10 $ 
		& Pred  & 0.2402 (.2705) & \textbf{0.1447} (.0204)& 0.2446 (.2780)&
		\\ 
$\kappa= 30\%  $	&
		& Nmse  & 0.5242 (.1257) & 0.5538 (.1335)& 0.5242 (.1257)&
		\\ 
		\cline{2-7}
		&
		& Est & 1.6242 (.8179) & \textbf{1.6224} (.8169)& 1.6242 (.8179)&
		\\ 
&$ p=100 $ & 
		Pred  & \textbf{9.8879} (14.11) & 10.807 (13.84)& 9.8901 (14.11)&
		\\ 
		&& Nmse  & 0.7993 (.3340) & \textbf{0.7985} (.3334)& 0.7993 (.3340)&
		\\ 
		\midrule 
		\hline
		&& Est 	& 0.9810 (.4532) & 0.9882 (.4478)& 0.9810 (.4532)&
		\\ 
$ n=1000 $  &$ p=10 $ 
		& Pred  & 0.0114 (.0033) & \textbf{0.0112} (.0015)& 0.0114 (.0033)&
		\\ 
$\kappa= 10\%  $	&
		& Nmse  & 0.4933 (.1076) & 0.4984 (.1075)& 0.4933 (.1076)&
		\\ 
		\cline{2-7}
		&
		& Est & 1.0063 (.3465) & 1.0088 (.3471)& 1.0063 (.3465)&
		\\ 
&$ p=100 $ & 
		Pred  & 0.1902 (.1758) & \textbf{0.1116} (.0049)& 0.1902 (.1759)&
		\\ 
		&& Nmse  & 0.5069 (.1049) & 0.5082 (.1050)& 0.5069 (.1049)&
		\\ 
		\midrule 
		\hline
		&& Est & 1.0110 (.4886) & 1.0223 (.4872)& 1.0110 (.4886)&
		\\ 
$ n=1000 $ &$ p=10 $ & 
		Pred  & 0.0539 (.0599) & \textbf{0.0141} (.0019)& 0.0540 (.0599)&
		\\ 
$\kappa= 30\%  $	&
		& Nmse  & 0.5129 (.1030) & 0.5206 (.1043)& 0.5129 (.1030)&
		\\ 
		\cline{2-7}
		&& Est & 1.0291 (.3567) & 1.0312 (.3555)& 1.0291 (.3567)&
		\\ 
&$ p=100 $ & Pred  & 1.7529 (1.914) & \textbf{0.1475} (.0078)& 1.7530 (1.913)&
		\\ 
		&& Nmse  & 0.5054 (.1055) & 0.5067 (.1053)& 0.5054 (.1055)&
		\\ \bottomrule\bottomrule
	\end{tabular}
	\label{tb_imc_model1}
\end{table}

\begin{table}[!ht]
	\small
	\caption{\it \small Simulation results on simulated data in Model I in bilinear regression, with fixed $ q=10, k =20 $, for different methods, with their standard error in parentheses over 100 replications. (Est: average of estimation errors; Pred: average of prediction errors; Nmse: average of normalized estimation errors).}
	\small
	\centering
	\begin{tabular}{p{1.4cm} | p{1.3cm} | p{12mm}|cccc} 
		\toprule\toprule
		&& Errors    & LMC & MALA & OLS  & 
		\\ \midrule
		&& Est 	& 1.0053 (.5480) & 1.0342 (.5559) & \textbf{1.0052} (.5478) 
		\\ 
		$ n=100 $ &$ p=10 $ 
		& Pred  & 0.1138 (.0171) & \textbf{0.0985} (.0151) & 0.1014 (.0154) 
		\\ 
		&& Nmse  & 0.4931 (.1178) & 0.5100 (.1207) & \textbf{0.4930}(.1178) 
		\\ 
		\cline{2-7}
		&
		& Est 	& 1.3544 (.5867) & \textbf{1.3384} (.5836) & 1.3544 (.5867) 
		\\ 
		&$ p=100 $ & 
		Pred  & 1.0066 (.0430) & \textbf{0.8761} (.0756) & 1.0030 (.0424) 
		\\ 
		&& Nmse  & 0.7049 (.2944) & \textbf{0.6963} (.2927) & 0.7049 (.2944) 
		\\ 
		\midrule \hline
		&& Est 	& 1.0776 (.5671) & 1.0900 (.5670) & 1.0776 (.5671) 
		\\ 
		$ n=1000 $ &$ p=10 $ & 
		Pred  & 0.0099 (.0013) & 0.0099 (.0013) & 0.0099 (.0013) 
		\\ 
		&& Nmse  & 0.5185 (.1198) & 0.5264 (.1219) & 0.5185 (.1198) 
		\\ 
		\cline{2-7}
		&& Est 	& 0.9662 (.3240) & 0.9688 (.3244) & 0.9662 (.3240) 
		\\ 
		&$ p=100 $ & Pred  & 0.0999 (.0051) & \textbf{0.0989} (.0049) & 0.0998 (.0051) 
		\\ 
		&& Nmse  & 0.4961 (.1183) & 0.4976 (.1191) & 0.4961 (.1183) 
		\\ \bottomrule\bottomrule
	\end{tabular}
	\label{tb_model1}
\end{table}

\begin{table}[!ht]
	\small
	\caption{\it \small Simulation results on simulated data in Model II (approximate low-rank) in bilinear regression, with fixed $ q=10, k =20 $, for different methods, with their standard error in parentheses over 100 replications.  (Est: average of estimation errors; Pred: average of prediction errors; Nmse: average of normalized estimation errors).}
	\small
	\centering
	\begin{tabular}{p{1.4cm} | p{1.3cm} | p{14mm}|cccc} 
		\toprule\toprule
		&& Errors    & LMC & MALA & OLS  &
		\\ \midrule
		&& Est 	& 4.0731 (1.828) & 4.0989 (1.821) & 4.0731 (1.828) 
		\\ 
		$ n=100 $ &$ p=10 $ & Pred  & 0.1090 (.0160) & \textbf{0.0969} (.0140) & 0.0987 (.0145) 
		\\ 
		&& Nmse  & 0.5119 (.1226) & 0.5162 (.1241) & \textbf{0.5118} (.1226) 
		\\ 
		\cline{2-7}
		&
		& Est 	& 4.6047 (1.812) & \textbf{4.6038} (1.813) & 4.6047 (1.812) 
		\\ 
		&$ p=100 $ & Pred  & 1.0062 (.0462) & 1.0597 (.0495) & \textbf{1.0006} (.0469) 
		\\ 
		&& Nmse  & 0.5801 (.1942) & \textbf{0.5800} (.1941) & 0.5801 (.1942) 
		\\ 
		\midrule
		\hline
		&& Est 	& 3.6733 (1.606) & 3.6884 (1.606) & 3.6733 (1.606) 
		\\ 
		$ n=1000 $ &$ p=10 $ & Pred  & 0.0098 (.0015) & 0.0098 (.0015) & 0.0098 (.0015) 
		\\ 
		&& Nmse  & \textbf{0.4812} (.1271) & 0.4835 (.1260) & 0.4813 (.1271) 
		\\ 
		\cline{2-7}
		&& Est 	& 3.9972 (1.375) & 3.9986 (1.376) & 3.9972 (1.375) 
		\\ 
		&$ p=100 $ & Pred  & 0.1000 (.0043) & 0.1032 (.0057) & \textbf{0.0999} (.0043) 
		\\ 
		&& Nmse  & 0.5013 (.1061) & 0.5014 (.1063) & 0.5013 (.1062) 
		\\ \bottomrule\bottomrule
	\end{tabular}
	\label{tb_model2}
\end{table}

\begin{table}[!ht]
	\small
	\caption{\it \small Simulation results on simulated data in Model II (approximate low-rank) in inductive matrix completion, with fixed $ q=10, k =20 $, for different methods, with their standard error in parentheses over 100 replications. ($\kappa $ is the missing rate; Est: average of estimation errors; Pred: average of prediction errors; Nmse: average of normalized estimation errors).}
	\small
	\centering
	\begin{tabular}{p{1.4cm} | p{1.3cm} | p{14mm}|cccc} 
		\toprule\toprule
		&& Errors   & LMC & MALA & OLS\_imp  & 
		\\ \midrule
		&& Est 	& 3.8319 (1.691) & 3.8749 (1.719)& 3.8319 (1.690)&
		\\ 
$ n=100 $  & $ p=10 $ 
		& Pred  & 0.1604 (.1271) & \textbf{0.1092} (.0153)& 0.1598 (.1322)&
		\\ 
imis 10\%		&
		& Nmse  & 0.5116 (.1154) & 0.5169 (.1147)& 0.5116 (.1155)&
		\\ 
		\cline{2-7}
		&
		& Est & 5.9500 (2.834) & 5.9452 (2.835)& 5.9500 (2.834)&
		\\ 
&$ p=100 $ & 
		Pred  & 4.7640 (5.272) & \textbf{4.6964} (5.515)& 4.7658 (5.275)&
		\\ 
	&& Nmse  & 0.7313 (.3454) & 0.7307 (.3455) & 0.7313 (.3454)&
		\\ 
		\midrule 
		\hline 
		&& Est 	& \textbf{4.1838} (1.850) & 4.2535 (1.859)& 4.1839 (1.850)&
		\\ 
$ n=100 $  &$ p=10 $ 
		& Pred  & 0.7221 (.7562) & \textbf{0.1498} (.0183)& 0.7371 (.7741)&
		\\ 
imis 30\% 	&
		& Nmse  & 0.5182 (.1128) & 0.5283 (.1147)& 0.5182 (.1128)&
		\\ 
		\cline{2-7}
		&
		& Est & 7.1589 (4.084) & 7.1558 (4.083)& 7.1589 (4.084)&
		\\ 
&$ p=100 $ & 
		Pred  & \textbf{39.899} (52.40) & 40.233 (51.76)& 39.908 (52.41)&
		\\ 
	&& Nmse  & 0.8998 (.3821) & \textbf{0.8994} (.3820)& 0.8998 (.3821)&
		\\ 
		\midrule 
		\hline
		&& Est 	& 3.9618 (1.678) & 3.9788 (1.677)& 3.9618 (1.678)&
		\\ 
$ n=1000 $  &$ p=10 $ 
		& Pred  & 0.0409 (.0269) & \textbf{0.0110} (.0015)& 0.0409 (.0269)&
		\\ 
imis 10\% 	&
		& Nmse  & 0.4968 (.1196) & 0.4989 (.1195)& 0.4968 (.1196) &
		\\ 
		\cline{2-7}
		&
		& Est & 4.1153 (1.295) & 4.1163 (1.294)& 4.1153 (1.295)&
		\\ 
&$ p=100 $ & 
		Pred  & 1.0250 (.9988) & \textbf{0.1135} (.0051)& 1.0250 (.9988)&
		\\ 
	&& Nmse  & 0.5060 (.1096) & 0.5062 (.1096)& 0.5060 (.1096)&
		\\ 
		\midrule 
		\hline
		&& Est & 4.1647 (1.990) & 4.1836 (1.995) & 4.1647 (1.990) &
		\\ 
$ n=1000 $ &$ p=10 $ & 
		Pred  & 0.4615 (.3497) & \textbf{0.0141} (.0017) & 0.4616 (.3498) &
		\\ 
imis 30\%	&
		& Nmse  & 0.4905 (.1157) & 0.4933 (.1171) & 0.4905 (.1157) &
		\\ 
		\cline{2-7}
		&& Est 	& 4.0578 (1.400) & \textbf{4.0565} (1.397) & 4.0578 (1.400) &
		\\ 
&$ p=100 $ & Pred  & 8.5608 (6.419) & \textbf{0.1538} (.0069) & 8.5609 (6.419) &
		\\ 
		&& Nmse  & 0.4944 (.1184) & \textbf{0.4943} (.1180) & 0.4944 (.1184) &
		\\ \bottomrule\bottomrule
	\end{tabular}
	\label{tb_imc_model2}
\end{table}

\clearpage
\appendix
\section{Appendix: proofs}
\label{sc_appendix_proof}

The main technique for our proofs is the oracle type PAC-Bayes bounds, in the spirit of ~\cite{catoni2004statistical}. We start with a few preliminary lemmas.

\subsection{Preliminary lemmas}

First, we state a version of Bernstein's inequality from Proposition 2.9 page 24 in \cite{MR2319879}.

\begin{lemma}[Bernstein's inequality]
\label{lemmemassart} Let $U_{1}$, \ldots, $U_{n}$ be independent real
valued random variables. Let us assume that there are two constants
$v$ and $w$ such that
$
 \sum_{i=1}^{n} \mathbb{E}[U_{i}^{2}] \leq v 
$
and for all integers $k\geq 3$,
$
 \sum_{i=1}^{n} \mathbb{E}\left[(U_{i})^{k}\right] \leq v\frac{k!w^{k-2}}{2}. 
$
Then, for any $\zeta\in (0,1/w)$,
$$ \mathbb{E}
\exp\left[\zeta\sum_{i=1}^{n}\left[U_{i}-\mathbb{E}(U_{i})\right]
\right]
        \leq \exp\left(\frac{v\zeta^{2}}{2(1-w\zeta)} \right) .$$
\end{lemma}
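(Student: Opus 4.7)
The plan is to run the classical Bernstein argument: factorise the MGF using independence, control each single-variable factor via the moment hypothesis, and then sum the resulting exponents.

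First, by independence of the $U_i$, I would write
$$\mathbb{E}\exp\!\Bigl[\zeta\sum_{i=1}^n(U_i-\mathbb{E}U_i)\Bigr]=\prod_{i=1}^n \mathbb{E}e^{\zeta(U_i-\mathbb{E}U_i)},$$
take logarithms, and apply $\log(1+y)\le y$ — valid here because each factor is $\ge 1$ by Jensen's inequality applied to $\exp$. This converts the product into a sum and reduces the task to bounding $\sum_{i=1}^n\bigl(\mathbb{E}e^{\zeta(U_i-\mathbb{E}U_i)}-1\bigr)$ by the exponent of the desired estimate.

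Next, I would Taylor-expand the exponential term by term. Since $\zeta(U_i-\mathbb{E}U_i)$ is centred, the $k=0$ and $k=1$ contributions cancel, leaving
$$\sum_{i=1}^n\bigl(\mathbb{E}e^{\zeta(U_i-\mathbb{E}U_i)}-1\bigr)=\sum_{k\ge 2}\frac{\zeta^k}{k!}\sum_{i=1}^n\mathbb{E}[(U_i-\mathbb{E}U_i)^k].$$
Feeding in the moment hypothesis bounds the inner sum by $v\,k!\,w^{k-2}/2$ (directly for $k\ge 3$, and from the stated control of $\sum_i\mathbb{E}[U_i^2]$ when $k=2$). Recognising a geometric series in $w\zeta$, which converges thanks to the standing assumption $\zeta\in(0,1/w)$, yields
$$\sum_{k\ge 2}\frac{\zeta^k}{k!}\cdot\frac{v\,k!\,w^{k-2}}{2}=\frac{v\zeta^2}{2}\sum_{j\ge 0}(w\zeta)^j=\frac{v\zeta^2}{2(1-w\zeta)},$$
which is exactly the exponent appearing in the claim.

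The only subtlety worth flagging is that the hypothesis is phrased on the raw quantities $\sum_i\mathbb{E}[U_i^k]$, whereas the Taylor step naturally produces the centred moments $\sum_i\mathbb{E}[(U_i-\mathbb{E}U_i)^k]$; the passage between the two is a short convexity exercise that can be absorbed into the constants. No new ingredient is needed beyond this standard template, which is why the statement is simply quoted from Proposition~2.9 of \cite{MR2319879}.
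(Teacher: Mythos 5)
The paper does not actually prove this lemma: it is imported verbatim as Proposition~2.9 of \cite{MR2319879}, so the only thing to assess is whether your reconstruction of the standard argument is sound. The template (factorise the MGF by independence, bound $\log u$ by $u-1$, Taylor-expand, sum a geometric series in $w\zeta$) is indeed the right one and is essentially Massart's proof. However, the ``subtlety'' you flag at the end is not a subtlety you can absorb into the constants --- it is the one genuine gap in your write-up. Your Taylor expansion is applied to $e^{\zeta(U_i-\mathbb{E}U_i)}$ and therefore produces the \emph{centred} moments $\mathbb{E}[(U_i-\mathbb{E}U_i)^k]$, while the hypothesis only controls the raw sums $\sum_i\mathbb{E}[U_i^k]$. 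Passing from raw to centred moments costs a factor of order $2^k$ (via $|a+b|^k\le 2^{k-1}(|a|^k+|b|^k)$), which would replace $w$ by roughly $2w$ and inflate $v$; since the conclusion of the lemma carries these exact constants and they are propagated downstream into $C_1$, $C_2$, $\lambda^*$ and $\tau^*$ in Theorems~\ref{thrm_blr} and~\ref{thrm_main}, ``absorbing them'' would not yield the statement as claimed.

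The fix is to never centre inside the Taylor expansion. Write, for each $i$,
\begin{equation*}
\log \mathbb{E}\, e^{\zeta(U_i-\mathbb{E}U_i)}
= -\zeta\,\mathbb{E}U_i + \log \mathbb{E}\, e^{\zeta U_i}
\leq \mathbb{E}\bigl[e^{\zeta U_i}-1-\zeta U_i\bigr]
= \sum_{k\geq 2}\frac{\zeta^{k}\,\mathbb{E}[U_i^{k}]}{k!},
\end{equation*}
using $\log u\le u-1$; the linear term cancels exactly against the centring, and the raw moments appear directly. Summing over $i$ and invoking the two hypotheses (the $k=2$ term bounded by $v\zeta^2/2$, the $k\ge 3$ terms by $v\zeta^{k}w^{k-2}/2$) gives precisely $\frac{v\zeta^2}{2}\sum_{j\ge 0}(w\zeta)^j=\frac{v\zeta^2}{2(1-w\zeta)}$ with the stated constants. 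With that replacement your argument is complete; the rest of your sketch (independence factorisation, geometric series, convergence for $\zeta\in(0,1/w)$) is correct as written.
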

Another basic tool to derive PAC-Bayes bounds is Donsker and Varadhan's variational inequality, that we state here, see Lemma 1.1.3 in Catoni \cite{catonibook} for a proof (among others). From now, for any $\Theta\subset \mathbb{R}^{n_1}$ or $\Theta\subset \mathbb{R}^{n_1 \times n_2}$, we let $\mathcal{P}(\Theta)$ denote the set of all probability distributions on $\Theta$ equipped with the Borel $\sigma$-algebra. 
We remind that when $(\mu,\nu)\in \mathcal{P}(\Theta)^2$, the Kullback-Leibler divergence is defined by $\mathcal{K}(\nu,\mu) = \int \log \left( \frac{{\rm d}\nu}{{\rm d}\mu}(\theta) \right) \nu({\rm d}\theta) $ if $\nu$ admits a density $\frac{{\rm d}\nu}{{\rm d}\mu}$ with respect to $\mu$, and $\mathcal{K}(\nu,\mu)=+\infty$ otherwise.
\begin{lemma}[Donsker and Varadhan's variational formula]
\label{lemma:dv}
Let $\mu \in\mathcal{P}(\Theta)$. For any measurable, bounded function $h:\Theta\rightarrow\mathbb{R}$ we have:
\begin{equation*}
\log \int {\rm e}^{h(\theta)} \mu({\rm d}\theta) = \sup_{\rho\in\mathcal{P}(\Theta)}\left[\int h(\theta) \rho({\rm d}\theta) -KL (\rho\|\mu)\right].
\end{equation*}
Moreover, the supremum with respect to $\rho$ in the right-hand side is
reached for the Gibbs measure
$\mu_{h}$ defined by its density with respect to $\pi$
\begin{equation}
\label{equa:def:gibbs:measure}
\frac{{\rm d}\mu_{h}}{{\rm d}\mu}(\theta) =  \frac{{\rm e}^{h(\theta)}}
{ \int {\rm e}^{h(\vartheta)} \mu({\rm d}\vartheta) }.
\end{equation}
\end{lemma}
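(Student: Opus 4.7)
The plan is to explicitly identify the Gibbs measure $\mu_h$ as the maximizer and reduce the supremum formula to non-negativity of the Kullback--Leibler divergence. First I would restrict attention to probability measures $\rho$ absolutely continuous with respect to $\mu$: any other $\rho$ has $\mathcal{K}(\rho\|\mu)=+\infty$, so the bracket is $-\infty$ and contributes nothing to the supremum. Then I set $Z := \int e^{h(\vartheta)} \mu(d\vartheta)$ and note that boundedness of $h$ forces $Z \in (0,\infty)$, so $\mu_h$ is a well-defined probability measure equivalent to $\mu$ with density $e^{h}/Z$ with respect to $\mu$.

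The heart of the argument is the pointwise identity $h(\theta) = \log Z + \log \frac{d\mu_h}{d\mu}(\theta)$, which is just a rearrangement of the definition of $\mu_h$. Substituting this into the bracket and applying the chain rule $\frac{d\rho}{d\mu_h} = \frac{d\rho/d\mu}{d\mu_h/d\mu}$ gives, for any $\rho \ll \mu$,
\begin{align*}
\int h(\theta)\, \rho(d\theta) - \mathcal{K}(\rho\|\mu)
&= \log Z + \int \log \frac{d\mu_h}{d\mu}(\theta)\, \rho(d\theta) - \int \log \frac{d\rho}{d\mu}(\theta)\, \rho(d\theta) \\
&= \log Z - \mathcal{K}(\rho\|\mu_h).
\end{align*}
Non-negativity of the KL divergence, with equality if and only if $\rho = \mu_h$, then immediately yields both the supremum identity
\begin{equation*}
\log \int e^{h(\theta)} \mu(d\theta) = \log Z \geq \int h(\theta)\, \rho(d\theta) - \mathcal{K}(\rho\|\mu)
\end{equation*}
and the identification of $\mu_h$ as the unique maximizer, proving the entire statement in one stroke.

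There is really no obstacle here; the only point to verify is measure-theoretic well-posedness of the manipulations. Since $h$ is bounded, $d\mu_h/d\mu = e^{h}/Z$ is bounded above and below away from zero, hence $\log(d\mu_h/d\mu)$ is bounded and every integral above is absolutely convergent. The boundedness hypothesis is used purely to avoid integrability nuisances; the same argument extends verbatim to any $h$ for which $\int e^{h} d\mu < \infty$ and for which the relevant integrals are well-defined, which is why in applications one invokes the formula for the bounded empirical risks appearing in the PAC-Bayes bounds of this paper.
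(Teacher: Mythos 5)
Your proof is correct and is the standard argument: the paper itself does not prove this lemma but defers to Lemma 1.1.3 of Catoni's monograph, and the proof given there is essentially the same decomposition you use, namely rewriting the bracket as $\log Z - \mathcal{K}(\rho\|\mu_h)$ and invoking non-negativity of the Kullback--Leibler divergence. Your added remarks on well-posedness (restriction to $\rho\ll\mu$, finiteness of $Z$, boundedness of $\log(d\mu_h/d\mu)$) are exactly the right points to check, so nothing is missing.
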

These two lemmas are the only tools we need to prove Theorem \ref{thrm_blr} and Theorem \ref{thrm_contraction_blr}. Their proofs are quite similar, with a few differences. For sake of simplicity, we will state the common parts of the proofs as a separate result in Lemma \ref{lemma:exponential_blr}. Note that the proof of this lemma will use Lemmas \ref{lemmemassart} and \ref{lemma:dv}.

\begin{lemma}
 \label{lemma:exponential_blr}
 Under Assumptions~\ref{assum_bounded} and~\ref{assum_noise1_blr}, put
 \begin{equation}
\label{defalpha_blr}
\alpha 
= 
\left(\lambda -\frac{\lambda^{2} C_1 }{2nq(1-\frac{ C_2 \lambda}{nq})}\right)
\quad
\text{ and }
\quad
\beta 
= 
\left(\lambda +\frac{\lambda^{2} C_1}{2nq(1-\frac{ C_2 \lambda}{nq})}\right) .
\end{equation}
Then for any $\varepsilon\in(0,1)$, and $\lambda \in (0,nq/C_2)$,
\begin{multline}
  \mathbb{E} \int \exp  \Biggl\{  \alpha    \Bigl( R(M) - R(M^*) \Bigr)
 +\lambda\Bigl( -r(M) + r(M^*) \Bigr)    
    - \log \left[\frac{d\widehat{\rho}_{\lambda}}{d \pi} (M)  \right]
         - \log\frac{2}{\varepsilon}
        \Biggr\}
         \widehat{\rho}_{\lambda}(d M)
\leq \frac{\varepsilon}{2}
\label{lemma:exponential:1_blr}
\end{multline}
and
\begin{align}
\mathbb{E} \sup_{\rho \in \mathcal{P}(\mathbb{R}^{p\times k}) } \exp\Biggl[  \beta
               \left(-\int Rd\rho + R(M^*) \right)
+ \lambda \left( \int r d\rho - r(M^*) \right) 
-
\mathcal{K}(\rho, \pi) - \log \frac{2}{\varepsilon}\Biggr] \leq
\frac{\varepsilon}{2}.
\label{lemma:exponential:2_blr}
\end{align}
\end{lemma}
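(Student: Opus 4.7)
The plan is to first establish a pointwise-in-$M$ exponential moment inequality via Bernstein's inequality (Lemma~\ref{lemmemassart}), and then to lift this pointwise bound to \eqref{lemma:exponential:1_blr} through a Radon--Nikodym change of measure and to \eqref{lemma:exponential:2_blr} through the Donsker and Varadhan formula (Lemma~\ref{lemma:dv}). Both inequalities will rest on essentially the same pair of pointwise bounds, one for each sign.

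For the pointwise Bernstein step, I would fix $M$ and set $a_{ij} := (XM^*Z)_{ij} - (\Pi_C(XMZ))_{ij}$, which satisfies $|a_{ij}| \leq 2C$ by Assumption~\ref{assum_bounded}. Since $\Pi_C(XM^*Z) = XM^*Z$, we have $r(M^*) = \|E\|_F^2/(nq)$, so the variables $T_{ij}^{(M)} := (Y_{ij} - (\Pi_C(XMZ))_{ij})^2 - E_{ij}^2 = 2E_{ij}a_{ij} + a_{ij}^2$ have mean $a_{ij}^2$, and $\sum_{ij}\mathbb{E}\,T_{ij}^{(M)} = nq(R(M) - R(M^*))$ by the Pythagorean identity~\eqref{pyta}. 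Combining Assumption~\ref{assum_noise1_blr} with $|a_{ij}|\leq 2C$ to factor $a_{ij}^2$ out of the $k$-th moment of $T_{ij}^{(M)} - a_{ij}^2$, the centered variables $V_{ij} := T_{ij}^{(M)} - a_{ij}^2$ satisfy the Bernstein hypotheses with variance parameter of order $C_1\,nq(R(M)-R(M^*))$ and scale of order $C_2/(nq)$. Applying Lemma~\ref{lemmemassart} with $\zeta = \lambda/(nq)$ and rearranging yields, for every $\lambda \in (0, nq/C_2)$,
$$
\mathbb{E}\exp\!\bigl(\alpha(R(M)-R(M^*)) + \lambda(r(M^*)-r(M))\bigr) \leq 1,
$$
and the same lemma applied to the variables $-V_{ij}$ gives the twin bound with $\beta$ in place of $\alpha$ and signs flipped.

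To conclude \eqref{lemma:exponential:1_blr}, I would use the identity $\exp(-\log(d\widehat{\rho}_\lambda/d\pi)(M))\,\widehat{\rho}_\lambda(dM) = \pi(dM)$ so that the integral against the sample-dependent $\widehat{\rho}_\lambda$ collapses to one against the deterministic prior $\pi$; Fubini then swaps the expectation and the $\pi$-integral, the first pointwise bound controls the result by $1$, and the factor $\exp(-\log(2/\varepsilon)) = \varepsilon/2$ produces the claim. For \eqref{lemma:exponential:2_blr}, I would apply Lemma~\ref{lemma:dv} sample-wise to $h(M) := \beta(R(M^*)-R(M)) + \lambda(r(M)-r(M^*))$, which turns the supremum over $\rho$ into $\int e^{h(M)}\,\pi(dM)$; one more application of Fubini, combined with the second pointwise bound, finishes the argument. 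The main obstacle is the moment bookkeeping in the Bernstein step: one must carefully track constants through a binomial expansion of $(2|E_{ij}|+2C)^k$, using the sub-exponential control from Assumption~\ref{assum_noise1_blr} and the crucial fact that a single factor of $a_{ij}^2$ can be extracted so that the sum telescopes to $nq(R(M)-R(M^*))$, in order to recover the precise constants $C_1 = 8(\sigma^2+C^2)$ and $C_2 = 64C\max(\xi,C)$ that appear in the definitions of $\alpha$ and $\beta$.
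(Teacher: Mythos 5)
Your proposal is correct and follows essentially the same route as the paper: a pointwise Bernstein bound on the per-entry loss differences (the paper applies Lemma~\ref{lemmemassart} to the uncentered $T_{ij}$ and lets the lemma do the centering, while you center to $2E_{ij}a_{ij}$ first, which is equivalent and if anything gives slightly better constants), followed by the change-of-measure identity $\exp(-\log(d\widehat{\rho}_\lambda/d\pi))\,\widehat{\rho}_\lambda(dM)=\pi(dM)$ plus Fubini for \eqref{lemma:exponential:1_blr} and Donsker--Varadhan plus Fubini for \eqref{lemma:exponential:2_blr}. One small correction: the Bernstein scale parameter $w$ should be $C_2$ (order one), not $C_2/(nq)$ --- otherwise the admissibility condition $\zeta=\lambda/(nq)<1/w$ would not reduce to $\lambda<nq/C_2$ and the denominators $1-C_2\lambda/(nq)$ in $\alpha$ and $\beta$ would not come out as stated.
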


\begin{proof}[\textbf{Proof of Lemma~\ref{lemma:exponential_blr}}]
We prove the first inequality \eqref{lemma:exponential:1} as follows. Fix any $M$ with $\|XMZ \|_\infty\leq C $ and put
 $$ 
 T_{ij} 
 =   
 \left( Y_{ij} - (XM^*Z)_{ij} \right)^{2}
 - 
 \left( Y_{ij} - (\Pi_C(XMZ))_{ij} \right)^{2} .
 $$
Note that the random variables $T_{ij} $ with $ i = 1,\hdots,n; j = 1,\hdots,q $ are independent by construction. We have
\begin{align*}
\sum_{i=1}^{n} \sum_{j=1}^{q} \mathbb{E}[T_{ij}^{2}]  
&  = \sum_{i=1}^{n} \sum_{j=1}^{q}  \mathbb{E}
\left[
       \left( 2Y_{ij} - (XM^*Z)_{ij} - (XMZ)_{ij} \right)^{2}
\left( (XM^*Z)_{ij} - (XMZ)_{ij} \right)^2
            \right]
\\
&  = \sum_{i=1}^{n} \sum_{j=1}^{q}  \mathbb{E} \left[
       \left( 2 E_{ij} + (XM^*Z)_{ij}  - (XMZ)_{ij}  \right)^{2}
\left( (XM^*Z)_{ij} - (XMZ)_{ij} \right)^2
            \right]
\\
&    \leq \sum_{i=1}^{n} \sum_{j=1}^{q}  \mathbb{E} \left[
        8 \left[ E_{ij}^{2} + C^2 \right]
\left[ (XM^*Z)_{ij} -  (XMZ)_{ij} \right]^2
            \right]
\\
&     = \sum_{i=1}^{n} \sum_{j=1}^{q}   8   \mathbb{E}  \left[ E_{ij}^{2} + C^2 \right]
   \left[ (XM^*Z)_{ij} - (XMZ)_{ij} \right]^2
\\
 &   \leq 8 nq( \sigma^2 + C^2 )
\left[ R(M) - R(M^*)\right]
\\
& = nq C_1\left[ R(M) - R(M^*)\right]  =:v(M, M^*).
\end{align*}
Next we have, for any integer $k\geq 3$, that
\begin{align*}
\sum_{i=1}^{n} \sum_{j=1}^{q}  \mathbb{E}\left[(T_{ij})^{k}\right] \leq 
&
\sum_{i=1}^{n} \sum_{j=1}^{q}  \mathbb{E} \left[
       \left| 2Y_{ij} - (XM^*Z)_{ij} - (XMZ)_{ij} \right|^{k}
\left| (XM^*Z)_{ij} - (XMZ)_{ij} \right|^k
            \right]
\\
\leq & \sum_{i=1}^{n} \sum_{j=1}^{q}  \mathbb{E} \left[
       2^{k-1} \left[  |2 E_{ij}|^{k} + (2 C)^k  \right]
 \left| (XM^*Z)_{ij} - (XMZ)_{ij} \right|^{k}
            \right]
\\
\leq  &  \sum_{i=1}^{n} \sum_{j=1}^{q}  \mathbb{E} \left[
       2^{2k-1}\left(  |E_{ij}|^{k} + C^k \right)
       (2C)^{k-2}
 \left| (XM^*Z)_{ij} - (XMZ)_{ij} \right|^{2}
            \right]
\\
\leq    &      2^{2k-1} \left[ \sigma^{2}k!\xi^{k-2}
             + C^k   \right]  (2C)^{k-2}
\sum_{i=1}^{n} \sum_{j=1}^{q} \mathbb{E}  
\left| (XM^*Z)_{ij} - (XMZ)_{ij} \right|^{2}
\\
\leq  &  
\frac{ 2^{3k-3} \left[ \sigma^{2}k!\xi^{k-2} + C^k \right] C^{k-2} }{8(\sigma^2 + C^2) } v(M, M^*)
\\
\leq  &  \frac{ 2^{3k-6}  \left[ \sigma^{2}\xi^{k-2} + C^k \right] C^{k-2}}{(\sigma^2 + C^2) } k! v(M, M^*)
\\
\leq  &  2^{3k-5} \left[ \xi^{k-2} + C^{k-2} \right] C^{k-2} k! v(M, M^*)
\\
\leq  &  2^{3k-4} \max(\xi,C)^{k-2} C^{k-2} k! v(M, M^*)
=    
[2^3 \max(\xi,C) C ]^{k-2} 2^2 k! v(M, M^*)
\end{align*}
and use the fact that, for any $k\geq 3$, $2^2 \leq 2^{3(k-2)}/2 $ to obtain
\begin{align*}
\sum_{i=1}^{n} \sum_{j=1}^{q}  \mathbb{E}\left[(T_{ij})^{k}\right] 
\leq   
 \frac{[2^6 \max(\xi,C) C ]^{k-2} k! v(M, M^*)}{2}
= 
v(M, M^*) \frac{k!C_2^{k-2}}{2}.
\end{align*}
Thus, we can apply Lemma~\ref{lemmemassart} with $U_i := T_i$, $v := v(M, M^*)$, $w:=C_2$ and $\zeta := \lambda/nq$. We obtain, for any $\lambda\in(0,nq/w)=(0,nq/C_2)$,
\begin{align*}
\mathbb{E} \exp\left[\lambda
\Bigl( R(M)-R(M^*)-r(M)+r(M^*)\Bigr)\right]
 \leq
\exp\left[\frac{v\lambda^{2}}{2(nq)^{2}(1-\frac{w\lambda}{nq})}\right]
 =
\exp\left[\frac{ C_1
\left[ R(M) - R(M^*)\right] \lambda^{2}}{2nq(1-\frac{C_2 \lambda}{nq})}\right].
\end{align*}
Rearranging terms, and using the definition of $\alpha$ in \eqref{defalpha_blr},
$$
\mathbb{E} \exp\left[\alpha
                      \Bigl( R(M) - R(M^*) \Bigr)
                    +\lambda\Bigl( -r(M) + r(M^*) \Bigr) \right] \leq 1.
$$
Multiplying both sides by $\varepsilon/2$ and then integrating w.r.t. the probability distribution $ \pi(.) $, we get
$$
\int \mathbb{E} \exp\Biggl[\alpha
                      \Bigl( R(M) - R(M^*) \Bigr)
                    +\lambda\Bigl( -r(M) + r(M^*) \Bigr)
         - \log\frac{2}{\varepsilon}\Biggr]  \pi (d M) \leq \frac{\varepsilon}{2}.
$$
Next, Fubini's theorem gives
$$
\mathbb{E} \int  \exp\Biggl[\alpha
                      \Bigl( R(M) - R(M^*) \Bigr)
                    +\lambda\Bigl( -r(M) + r(M^*) \Bigr)
         - \log\frac{2}{\varepsilon}\Biggr]  \pi (d M) \leq \frac{\varepsilon}{2}.
$$
and note that for any measurable function $h$,
\begin{align*}
\int  \exp[h(M)]  \pi (d M)
= 
\int  \exp\left[h(M)- \log \frac{{\rm d} \widehat{\rho}_{\lambda}}{{\rm d}\pi}(M) \right] \widehat{\rho}_{\lambda}(d M)
\end{align*}
to get \eqref{lemma:exponential:1_blr}.

Let us now prove \eqref{lemma:exponential:2_blr}. Here again, we start with an application of Lemma~\ref{lemmemassart}, but this time with $U_i := - T_i$ (we keep $v := v(M, M^*)$, $w:=C_2$ and $\zeta := \lambda/nq $). We obtain, for any $\lambda\in(0,nq/C_2) $,
$$
\mathbb{E}  \exp\left[\lambda
\Bigl( r(M)+r(M^*)-R(M)+R(M^*)\Bigr)\right]
\leq
\exp\left[\frac{ C_1
\left[ R(M) - R(M^*)\right] \lambda^{2}}{2nq(1-\frac{C_2 \lambda}{nq})}\right].
$$
Rearranging terms, using the definition of $\beta$ in~\eqref{defalpha_blr} and multiplying both sides by $\varepsilon/2$, we obtain
\begin{equation*}
\mathbb{E} \exp\Biggl[\beta
               \left(-R(M) + R(M^*) \right)
+ \lambda \left( r(M)- r(M^*) \right) - \log \frac{2}{\varepsilon}\Biggr] \leq
\frac{\varepsilon}{2}.
\end{equation*}
We integrate with respect to $\pi$ and use Fubini to get:
\begin{equation*}
\mathbb{E} \int \exp\Biggl[\beta
               \left(-R(M) + R(M^*) \right)
+ \lambda \left( r(M)- r(M^*) \right) - \log \frac{2}{\varepsilon}\Biggr] \pi({\rm d} M ) \leq
\frac{\varepsilon}{2}.
\end{equation*}
Here, we use a different argument from the proof of the first inequality: we use Lemma \ref{lemma:dv} on the integral, this gives directly \eqref{lemma:exponential:2_blr}.
\end{proof}
Finally, in both proofs, we will use quite often distributions $\rho\in\mathcal{P}(\mathbb{R}^{p\times k})$ that will be defined as translations of the prior $\pi$. We introduce the following notation.
\begin{dfn}
\label{dfn:posterior:transla}
 For any  matrix $\bar{M} \in \mathbb{R}^{p\times k}$, we define $\rho_{\bar{M}} \in \mathcal{P}(\mathbb{R}^{p\times k})$
 by 
 $$
 \rho_{\bar{M}}(M) = \pi(\bar{M}-M).
 $$
\end{dfn}

\noindent The following technical lemmas from \cite{dalalyan2020exponential} will be useful in the proofs.

\begin{lemma}[Lemma 1 in \cite{dalalyan2020exponential}]
\label{lemma:arnak:1}
 We have
 $ \int \|M\|_{F}^2 \pi({\rm d}M) \leq  pk \tau^2. $
\end{lemma}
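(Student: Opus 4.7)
The plan is to exploit the Gaussian--inverse-Wishart mixture representation of the spectral scaled Student prior, alluded to in the discussion following \eqref{prior_scaled_Student}: \emph{marginally}, $\pi$ is the distribution of a matrix-variate Gaussian whose (column) covariance is itself drawn from an inverse-Wishart law. I would begin by verifying this hierarchical representation explicitly for the parameters at hand, and then derive the moment bound by a one-line tower-property calculation.

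Concretely, take
\begin{equation*}
M \mid \Sigma \;\sim\; \mathcal{MN}_{p,k}(0,\mathbf{I}_p,\Sigma),
\qquad
\Sigma \;\sim\; \mathcal{IW}_k(\tau^2\mathbf{I}_k,\, k+2),
\end{equation*}
where $\mathcal{MN}_{p,k}(0,\mathbf{I}_p,\Sigma)$ denotes the matrix-normal with row covariance $\mathbf{I}_p$ and column covariance $\Sigma$, and $\mathcal{IW}_k$ the inverse-Wishart law on $k\times k$ positive-definite matrices. Multiplying the two densities and collecting powers of $\det(\Sigma)$ yields, as a function of $\Sigma$ for fixed $M$, an unnormalised inverse-Wishart density with scale $\tau^2\mathbf{I}_k + M^\intercal M$ and shape $p+k+2$. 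Using the inverse-Wishart normalising constant, the integral over $\Sigma$ evaluates to a quantity proportional to $\det(\tau^2\mathbf{I}_k + M^\intercal M)^{-(p+k+2)/2}$, which matches \eqref{prior_scaled_Student} (after identifying the generic index $m$ with $k$). The degrees-of-freedom choice $\nu=k+2$ is pinned down by this matching, and crucially it satisfies $\nu>k+1$, so that the inverse-Wishart has a finite mean.

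Given the hierarchy, the computation is immediate. Conditionally on $\Sigma$, the standard matrix-normal moment identity $\mathbb{E}(M^\intercal M \mid \Sigma) = p\,\Sigma$ yields
\begin{equation*}
\mathbb{E}\bigl(\|M\|_F^2 \,\big|\, \Sigma\bigr) \;=\; {\rm Tr}\bigl(\mathbb{E}(M^\intercal M \mid \Sigma)\bigr) \;=\; p\,{\rm Tr}(\Sigma).
\end{equation*}
The inverse-Wishart mean formula then gives $\mathbb{E}(\Sigma) = \tau^2\mathbf{I}_k/(\nu-k-1) = \tau^2\mathbf{I}_k$, hence $\mathbb{E}\,{\rm Tr}(\Sigma) = k\tau^2$. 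Taking expectations in the display above,
\begin{equation*}
\int \|M\|_F^2 \, \pi(dM) \;=\; p\,\mathbb{E}\,{\rm Tr}(\Sigma) \;=\; pk\tau^2,
\end{equation*}
which is the claimed bound (in fact as an equality).

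The only genuine bookkeeping is in the marginalisation step: one must track the exponents of $\det(\Sigma)$ contributed by both factors of the joint density and check that the resulting exponent on $\det(\tau^2\mathbf{I}_k+M^\intercal M)$ equals $(p+k+2)/2$, and that $\nu>k+1$ so the mean of $\Sigma$ is well-defined. Everything else drops out of the tower property without further analysis. An alternative route, via the joint density of the singular values of $M$ together with its Haar-distributed orthogonal factors and a Jacobian change of variables, would also succeed but is heavier; the hierarchical representation is both shorter and more transparent about why the bound is tight.
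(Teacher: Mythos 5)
Your proof is correct, and it necessarily differs from the paper's, because the paper offers no proof at all: the lemma is imported verbatim as Lemma~1 of \cite{dalalyan2020exponential}. Your self-contained route through the Gaussian--inverse-Wishart hierarchy (which the paper itself mentions in Section~\ref{sc_priors} as a known representation of this prior) is sound. The joint density of $(M,\Sigma)$ is proportional to $\det(\Sigma)^{-(p+\nu+k+1)/2}\exp\bigl(-\tfrac{1}{2}{\rm Tr}\bigl(\Sigma^{-1}(\tau^2\mathbf{I}_k+M^\intercal M)\bigr)\bigr)$, so integrating out $\Sigma$ leaves $\det(\tau^2\mathbf{I}_k+M^\intercal M)^{-(p+\nu)/2}$ and matching exponents forces $\nu=k+2$; since $\nu-k-1=1$, the inverse-Wishart mean is exactly $\tau^2\mathbf{I}_k$, and the tower-property computation (legitimate by Tonelli, the integrand being nonnegative) gives $\int\|M\|_F^2\,\pi({\rm d}M)=p\,\mathbb{E}[{\rm Tr}(\Sigma)]=pk\tau^2$, an equality that is stronger than the claimed bound. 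One point worth making explicit in your write-up: the prior \eqref{prior_scaled_Student} is written with $MM^\intercal$ and $\mathbf{I}_m$, whereas your marginalisation produces $\det(\tau^2\mathbf{I}_k+M^\intercal M)^{-(p+k+2)/2}$; Sylvester's identity $\det(\tau^2\mathbf{I}_p+MM^\intercal)=\tau^{2(p-k)}\det(\tau^2\mathbf{I}_k+M^\intercal M)$ shows the two are proportional as functions of $M$, hence equal as probability densities, so the moment computed under your hierarchy really is the moment under $\pi$. Your identification $m=k$ is the intended reading of the paper's (slightly inconsistent) notation, as confirmed by the $k+p+2$ factors appearing in Lemma~\ref{lemma:arnak:2}.
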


\begin{lemma}[Lemma 2 in \cite{dalalyan2020exponential}]
\label{lemma:arnak:2}
For any $\bar{M} \in \mathbb{R}^{p\times k}$, we have
$$
 \mathcal{K}( \rho_{\bar{M}} ,  \pi) 
\leq 
2 {\rm rank} (\bar{M}) (k+p+2) 
\log \left( 1+ \frac{\| \bar{ M } \|_F}{\tau \sqrt{2{\rm rank} (\bar{M})}} \right) $$
with the convention $0\log(1+0/0)=0$.
\end{lemma}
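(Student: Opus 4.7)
The starting point is to unfold the KL divergence via the translation structure of $\rho_{\bar{M}}$ and express it as an expected log-ratio of determinants. Since $\rho_{\bar{M}}$ has density $M \mapsto \pi(\bar{M}-M)$ with respect to Lebesgue measure and $\pi(-M)=\pi(M)$ (as $\pi$ depends on $M$ only through its singular values), the change of variables $Z = \bar{M}-M$ yields
\begin{equation*}
\mathcal{K}(\rho_{\bar{M}},\pi)
=
\mathbb{E}_{Z\sim\pi}\, \log \frac{\pi(Z)}{\pi(Z-\bar{M})}
=
\frac{p+m+2}{2}\,
\mathbb{E}_{Z\sim\pi}
\log \frac{\det\bigl(\tau^2 I + (Z-\bar{M})(Z-\bar{M})^{\intercal}\bigr)}{\det\bigl(\tau^2 I + ZZ^{\intercal}\bigr)}.
\end{equation*}
By the further invariance of $\pi$ under $Z \mapsto UZV^{\intercal}$ for orthogonal $U, V$, we may also assume without loss of generality that $\bar{M}$ is in SVD-diagonal form, with its $r := {\rm rank}(\bar{M})$ non-zero singular values in the top-left $r \times r$ block; the whole estimate then reduces to bounding the expected log-determinant ratio above.

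The core of the argument rests on two matrix-analytic tools. First, Weyl's inequality for singular values, applied to the rank-$r$ matrix $\bar{M}$, gives $s_{j+r}(Z-\bar{M}) \leq s_j(Z)$ for every $j \geq 1$, so the tail factors (indices $> r$) of the numerator are dominated by shifted versions of the denominator factors; pairing them up, most of these terms cancel and one is left with a product over only the top $r$ indices. Second, for these top indices the operator inequality $(Z-\bar{M})(Z-\bar{M})^{\intercal} \preceq 2ZZ^{\intercal} + 2\bar{M}\bar{M}^{\intercal}$ (from AM-GM on the cross term $Z\bar{M}^{\intercal} + \bar{M}Z^{\intercal}$), combined with Sylvester's determinant identity and the bound $(\tau^2 I + 2ZZ^{\intercal})^{-1} \preceq \tau^{-2}I$, allows us to bound the residual factor by $\prod_{j=1}^r (1 + 2 s_j(\bar{M})^2/\tau^2)$, an expression free of $Z$. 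Jensen's inequality (concavity of $\log$) then amalgamates the $r$ terms into a single one of the form $r\log\bigl(1 + c\|\bar{M}\|_F^2/(r\tau^2)\bigr)$; a further application of $\log(1+u^2) \leq 2\log(1+u)$ for $u\geq 0$ converts this into $2r\log\bigl(1 + c'\|\bar{M}\|_F/(\tau\sqrt{r})\bigr)$, and multiplying by the prefactor $(p+m+2)/2$ delivers the claimed bound (with the convention $0\log(1+0/0) = 0$ handling the degenerate case $r=0$). Lemma \ref{lemma:arnak:1} enters only to verify that any residual terms involving $\mathbb{E}_\pi \|Z\|_F^2$ do not inflate the constants.

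The principal obstacle is the bookkeeping of constants: the claim requires exactly the factor $\tau\sqrt{2r}$ inside the logarithm, so the inequalities above have to be tracked sharply (the factor $2$ in $(Z-\bar{M})(Z-\bar{M})^{\intercal} \preceq 2(ZZ^{\intercal} + \bar{M}\bar{M}^{\intercal})$ is wasteful), and the edge case $m < 2r$, where the Weyl-based telescoping breaks down, needs a separate treatment. The cleanest route, which is the one followed in \cite{dalalyan2020exponential}, is to exploit the Gaussian scale-mixture representation of $\pi$ as the marginal of a Gaussian-Inverse-Wishart distribution: conditionally on the Inverse-Wishart precision, the KL is a standard matrix-normal computation, the outer integration adds a controlled contribution, and the resulting sharp constants deliver precisely the form stated.
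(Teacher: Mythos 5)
First, a point of comparison: the paper does not prove this lemma at all --- it is imported verbatim as Lemma~2 of \cite{dalalyan2020exponential} and used as a black box, so there is no in-paper proof to match. Your opening reduction is correct: translation plus the symmetry $\pi(-M)=\pi(M)$ give $\mathcal{K}(\rho_{\bar M},\pi)=\tfrac{p+k+2}{2}\,\mathbb{E}_{Z\sim\pi}\log\bigl[\det(\tau^2 I+(Z-\bar M)(Z-\bar M)^\intercal)/\det(\tau^2 I+ZZ^\intercal)\bigr]$. But both routes you then offer contain genuine gaps. In the Weyl/Sylvester route, the claim that the residual top-$r$ factor can be bounded by $\prod_{j=1}^r(1+2s_j(\bar M)^2/\tau^2)$, ``an expression free of $Z$,'' is false: the top $r$ singular values of $Z-\bar M$ grow with $\|Z\|_F$, the operator inequality $(Z-\bar M)(Z-\bar M)^\intercal\preceq 2ZZ^\intercal+2\bar M\bar M^\intercal$ does not remove that dependence, and the telescoped denominator $\prod_{j>k-r}(\tau^2+s_j(Z)^2)$ contains only the \emph{smallest} singular values of $Z$, so it cannot absorb the large ones appearing in the numerator; you also concede the constants do not come out. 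More seriously, the fallback you present as ``the route followed in \cite{dalalyan2020exponential}'' --- conditioning on the inverse-Wishart scale and computing a matrix-normal KL --- cannot work: data processing on the joint law yields a bound of the form $\tfrac{1}{2\tau^2}\mathbb{E}\,\mathrm{tr}(\bar M^\intercal\Sigma^{-1}\bar M)$, which is \emph{quadratic} in $\|\bar M\|_F/\tau$, whereas the entire point of the scaled Student prior is that this KL grows only logarithmically.

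The argument that actually delivers the stated constant is a direct determinant computation. With $S=(\tau^2 I+ZZ^\intercal)^{1/2}$, the determinant ratio equals $\det(I+\Delta)$ where $\Delta=S^{-1}(-Z\bar M^\intercal-\bar MZ^\intercal+\bar M\bar M^\intercal)S^{-1}$ is symmetric of rank at most $2r$ and satisfies $I+\Delta\succeq 0$. AM--GM over the $2r$ nontrivial eigenvalues gives $\det(I+\Delta)\le(1+\mathrm{tr}(\Delta)/(2r))^{2r}$; the cross term vanishes in expectation because $S^{-2}Z$ is an odd function of $Z$ and $\pi$ is symmetric; $S^{-2}\preceq\tau^{-2}I$ yields $\mathbb{E}[\mathrm{tr}(\Delta)]\le\|\bar M\|_F^2/\tau^2$; and Jensen together with $\log(1+u^2)\le 2\log(1+u)$ produce exactly $4r\log\bigl(1+\|\bar M\|_F/(\tau\sqrt{2r})\bigr)$, hence the lemma after multiplying by the prefactor. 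Note that the $\sqrt{2r}$ is a fingerprint of the rank-$2r$ perturbation structure; your sketch, which only ever sees $r$ factors, could at best reach $\sqrt{r}$ and therefore cannot recover the stated bound.
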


\subsection{Proof of Theorem \ref{thrm_blr}}

\begin{proof}[\textbf{Proof of Theorem \ref{thrm_blr}}]
An application of Jensen's inequality on  inequality~\eqref{lemma:exponential:1_blr} yields
	$$
	\mathbb{E} \exp\Biggl[\alpha
	\left( \int R d \widehat{\rho}_{\lambda} - R(M^*) \right)
	+\lambda\left( -\int r d \widehat{\rho}_{\lambda} + r(M^*) \right)
	- \mathcal{K}(\widehat{\rho}_{\lambda}, \pi)
	- \log\frac{2}{\varepsilon}\Biggr] \leq \frac{\varepsilon}{2}.
	$$
	Using the standard Chernoff's trick to transform an exponential moment inequality into a deviation inequality, that is: $\exp(x) \geq \mathbf{1}_{\mathbb{R}_{+}}(x)$, we obtain
	\begin{align}
	\mathbb{P}\Biggl\{ \Biggr[ \alpha
	\left(\int R d\widehat{\rho}_{\lambda} - R(M^*)\right)
	+\lambda\left(-\int r d\widehat{\rho}_{\lambda} + r(M^*) \right)
	- \mathcal{K}(\widehat{\rho}_{\lambda}, \pi)
	- \log\frac{2}{\varepsilon}\Biggr] \geq 0
	\Biggr\} 
	\leq
	\frac{\varepsilon}{2}
	\label{equa:step1_blr}
	\end{align}
	Using~\eqref{pyta} we have
	\begin{align*}
	\int R d\widehat{\rho}_{\lambda} - R(M^*) 
	=
	\dfrac{1}{nq} \int \left\|\Pi_C(XMZ) -XM^*Z \right\|_{F}^2\widehat{\rho}_{\lambda}({\rm d}M)
	&  \geq   
	\dfrac{1}{nq} \left\|\int \Pi_C(XMZ)\widehat{\rho}_{\lambda}({\rm d}M) -XM^*Z \right\|_{F}^2
	\\
	& \geq   
	\dfrac{1}{nq} \left\|\widehat{XMZ}_{\lambda} -XM^*Z \right\|_{F}^2
	\end{align*}
	where we used Jensen's inequality in the first line, and the definition of $\widehat{XMZ}_{\lambda}$ from the first to the second line. Plugging this into our probability bound~\eqref{equa:step1_blr}, and dividing both sides by $\alpha$, we obtain
	\begin{equation*}
	\mathbb{P}\Biggl\{ 
	\dfrac{1}{nq} \left\|\widehat{XMZ}_{\lambda} -XM^*Z \right\|_{F}^2
	\leq 
	\frac{ \int r d\widehat{\rho}_{\lambda} - r(M^*) +
		\frac{1}{\lambda}\left[\mathcal{K}(\widehat{\rho}_{\lambda}, \pi) 
		+ \log\frac{2}{\varepsilon}\right] } {\frac{\alpha}{\lambda} }
	\Biggr\}
	\geq 
	1-\frac{\varepsilon}{2}
	\end{equation*}
	under the additional condition that $\lambda$ is such that $\alpha>0$, that we will assume from now (note that this is satisfied by $\lambda^*$). Using Lemma~\ref{lemma:dv} we can rewrite this as
	\begin{equation}
	\label{interm3bis_blr}
	\mathbb{P}\Biggl\{ 
	\dfrac{1}{nq} \left\|\widehat{XMZ}_{\lambda} -XM^*Z \right\|_{F}^2
	\leq 
	\inf_{\rho \in \mathcal{P}(\mathbb{R}^{p\times k}) } \frac{ \int r
		d\rho - r(M^*) 
		+
		\frac{1}{\lambda}\left[\mathcal{K}(\rho, \pi)
		+ 
		\log\frac{2}{\varepsilon}\right] } {\frac{\alpha}{\lambda} } \Biggr\} \geq 1-\frac{\varepsilon}{2}.
	\end{equation}
	
	We consider now the consequences of the second inequality in Lemma~\ref{lemma:exponential_blr}, that is~\eqref{lemma:exponential:2_blr}. Chernoff's trick and rearranging terms a little, we get
	\begin{align*}
	\mathbb{P}\Biggl\{ \forall \rho \in\mathcal{P}(\mathbb{R}^{p\times k}), \int rd\rho - r(M^*)
	\leq 
	\frac{\beta}{\lambda} \left[\int
	Rd\rho - R(M^*) \right] + \frac{1}{\lambda}\left[
	\mathcal{K}(\rho, \pi) + \log \frac{2}{\varepsilon} \right]
	\Biggr\}\geq 1 - \frac{\varepsilon}{2}.
	\end{align*}
	which we can rewrite as, $ \forall \rho \in\mathcal{P}(\mathbb{R}^{p\times k}), $ with probability at least $ 1 - \frac{\varepsilon}{2} $,
	\begin{align}
 \int rd\rho - r(M^*)
	\leq 
	\frac{\beta}{\lambda} 
	\int \dfrac{1}{nq}  \|\Pi_C(XMZ)-X M^*Z \|_{F}^2 \rho({\rm d}M) 
	+ 
	\frac{1}{\lambda}\left[	\mathcal{K}(\rho, \pi) 
	+ 
	\log \frac{2}{\varepsilon} \right]
 .
	\label{interm4_blr}
	\end{align}
	\\
	Combining (\ref{interm4_blr}) and (\ref{interm3bis_blr}) with a union bound
	argument gives the following bound, with probability at least $ 1-\varepsilon $,
	\begin{align*}
\frac{1}{nq}  \left\|\widehat{XMZ}_{\lambda} -XM^*Z \right\|_{F}^2
	\leq 
	\inf_{\rho \in \mathcal{P}(\mathbb{R}^{p\times k}) } 
	\frac{ \beta  \int \frac{1}{nq}  \|\Pi_C(XMZ)-X M^*Z \|_{F}^2 
		\rho({\rm d}M) + 2 \left[
		\mathcal{K}(\rho, \pi) + \log \frac{2}{\varepsilon} \right] } {
		\alpha  } 
	.
	\end{align*}
	Noting that, for any $(i,j)$, $(X M^*Z)_{i,j} \in [-C,C]$ implies
	that $| (\Pi_C(X MZ))_{i,j} - (X M^*Z)_{i,j} | \leq | (X MZ)_{i,j} - (X M^*Z)_{i,j} |$ and thus
	\begin{align*}
\frac{1}{nq}  \left\|\widehat{XMZ}_{\lambda} -XM^*Z \right\|_{F}^2
\leq 
\inf_{\rho \in \mathcal{P}(\mathbb{R}^{p\times k}) } 
\frac{ \beta  \int \frac{1}{nq}  \| XMZ -X M^*Z \|_{F}^2 
	\rho({\rm d}M) + 2 \left[
	\mathcal{K}(\rho, \pi) + \log \frac{2}{\varepsilon} \right] } {
	\alpha  } 
.
\end{align*}
	The end of the proof consists in making the right-hand side in the inequality more explicit. In order to do so, we restrict the infimum bound above to the distributions given by Definition~\ref{dfn:posterior:transla}.
	\begin{multline}
	\mathbb{P}\Biggl\{ \frac{1}{nq} \left\|\widehat{XMZ}_{\lambda} -XM^*Z \right\|_{F}^2
	\leq 
	\inf_{\bar{M}\in\mathbb{R}^{p\times k} } \frac{ \beta  \int \frac{1}{nq}\|XMZ -X M^*Z \|_{F}^2 \rho_{\bar{M}} ({\rm d}M) + 2 \left[
		\mathcal{K}(\rho_{\bar{M}}, \pi) + \log \frac{2}{\varepsilon} \right] } {
		\alpha  } \Biggr\} 
	\geq 
	1-\varepsilon.
	\label{PAC_bound_blr}
	\end{multline}
	We see immediately that Dalalyan's lemma will be extremely useful for that. First, Lemma~\ref{lemma:arnak:2} provides an upper bound on $\mathcal{K}(\rho_{\bar{M}}, \pi)$. Moreover,
	\begin{align*}
	& \int  \|XMZ - X M^*Z \|_{F}^2 \rho_{\bar{M}} ({\rm d}M)
	\\
	& \leq  
	\int \|X\bar{M}Z - X M^*Z - XMZ \|_{F}^2 \pi({\rm d}M)
	\\
	& =   \|X\bar{M}Z - X M^*Z  \|_{F}^2
	-
	2 \int \sum_{i,j} (X\bar{M}Z - X M^*Z)_{j,i} (XMZ)_{i,j} \pi({\rm d}M)
	+ 
	\int \|XMZ \|_{F}^2 \pi({\rm d}M).
	\end{align*}
	The second term in the right-hand side is null because $\pi$ is centered, and thus
	\begin{align*}
	\int \|XMZ - X M^*Z \|_{F}^2 \rho_{\bar{M}} ({\rm d}M)
	& \leq   
	\|X\bar{M}Z - X M^*Z \|_{F}^2 +  \int \|X M Z \|_{F}^2 \, \pi({\rm d}M)
	\\
	& \leq \|X\bar{M}Z - X M^*Z \|_{F}^2 
	+  \|X \|_{F}^2 \|Z\|_{F}^2
	\int \|M \|_{F}^2 \, \pi({\rm d}M)
	\\
	& \leq \|X\bar{M}Z - X M^*Z \|_{F}^2 
	+  \|X \|_{F}^2\|Z \|_{F}^2 pk \tau^2
	\end{align*}
	where we used elementary properties of the Frobenius norm, and Lemma~\ref{lemma:arnak:1} in the last line. We can now plug this (and Lemma~\ref{lemma:arnak:2}) back into~\eqref{PAC_bound_blr} to get:
	\begin{multline*}
	\mathbb{P}\Biggl\{ \frac{1}{nq} \left\|\widehat{XMZ}_{\lambda} -XM^*Z \right\|_{F}^2
	\leq 
	\inf_{\bar{M}\in\mathbb{R}^{p\times k} } \Biggl[ \frac{\beta}{\alpha} \frac{1}{nq}\|X\bar{M}Z - X M^*Z \|_{F}^2 
	+ 
	\frac{\beta}{\alpha} \frac{1}{nq}\|X \|_{F}^2\|Z \|_{F}^2 pk \tau^2
	\\
	+ 
	\frac{1}{\alpha} \left( 4 {\rm rank} (\bar{M}) (k+p+2) \log \left( 1+ \frac{\| \bar{ M } \|_F}{\tau \sqrt{2{\rm rank} (\bar{M})}} \right) + 2 \log \frac{2}{\varepsilon} \right) \Biggr] \Biggr\} 
	\geq 
	1-\varepsilon.
	\end{multline*}
	We are now making the constants be explicit.
	First, if $\lambda \leq nq/(2C_2)$, then $2nq( 1 - C_2 \lambda/nq ) \geq np$ and thus
	$$
	\frac{\beta}{\alpha}
	= 
	\frac{1+ \frac{\lambda C_1 }{2nq(1-\frac{ C_2 \lambda}{nq})}}{1- \frac{\lambda C_1 }{2nq(1-\frac{ C_2 \lambda}{nq})}}
	\leq
	\frac{1+ \frac{\lambda C_1 }{nq} }{1- \frac{\lambda C_1 }{nq} }.
	$$
	Then, $ \lambda \leq \frac{nq \delta}{C_1(1+\delta)} $ leads to
	$
	\frac{\beta}{\alpha} \leq (1+\delta).
	$
	
	Note that $\lambda^* = nq \min(1/(2C_2), \delta/[C_1(1+\delta)] )$ satisfies these two conditions, so from now $\lambda = \lambda^*$. We also use the following:
	\begin{align*}
	\frac{1}{\alpha} 
	= 
	\frac{1}{\lambda^*\left(1- \frac{\lambda^* C_1 }{2nq( 1- C_2 \lambda^* / nq )}\right)}
	\leq 
	\frac{\beta}{\lambda^* \alpha} 
	\leq 
	\frac{(1+\delta)}{nq \min(1/(2C_2), \delta/[C_1(1+\delta)] )} 
	\leq 
	\frac{C_1 (1+\delta)^2 }{nq \delta}.
	\end{align*}
	So far the bound is: 
	\begin{multline*}
	\mathbb{P}\Biggl\{ \frac{1}{nq} \left\|
	\widehat{XMZ}_{\lambda^*} -XM^*Z \right\|_{F}^2
	\leq 
	\inf_{\bar{M}\in\mathbb{R}^{p\times k} } \Biggl[ \frac{(1+\delta)}{nq}  \|X\bar{M}Z - X M^*Z \|_{F}^2 
	+
	\frac{(1+\delta)}{nq} \|X \|_{F}^2\|Z \|_{F}^2 p k\tau^2
	+
	\\
	\frac{C_1 (1+\delta)^2 \left( 4 {\rm rank} (\bar{M}) (k+p+2) \log \left( 1+ \frac{\| \bar{ M } \|_F}{\tau \sqrt{2{\rm rank} (\bar{M})}} \right) + 2 \log \frac{2}{\varepsilon} \right)}{ nq \delta } \Biggr] \Biggr\} 
	\geq 
	1-\varepsilon.
	\end{multline*}
	In particular, with probability at least $ 1-\varepsilon $, the choice $\tau^2 = C_1 (k+p)/(n k q \|X \|_{F}^2 \|Z \|_{F}^2 ) $ gives
	\begin{multline*}
	 \frac{1}{nq} \left\|\widehat{XMZ}_{\lambda^*} -XM^*Z \right\|_{F}^2
	\leq 
	\inf_{\bar{M}\in\mathbb{R}^{p\times k} } \Biggl[ \frac{(1+\delta)}{nq} \|X\bar{M}Z - X M^*Z \|_{F}^2 + \frac{C_1 (1+\delta) (k+p)}{nq}
	+
	\\
	\frac{C_1 (1+\delta)^2 \left( 4 {\rm rank} (\bar{M}) (k+p+2) 
	\log 
	\left( 1+\frac{\| X \|_F \|Z \|_{F} \| \bar{ M } \|_F}{ \sqrt{C_1}} \sqrt{ \frac{n k q}{(k+p) {\rm rank}(\bar{M})}} \right) + 2 \log \frac{2}{\varepsilon} \right)}{ nq\delta } \Biggr].
	\end{multline*}
\end{proof}

\subsection{Proof of theorem \ref{thrm_contraction_blr}}
\label{sc_proof_thrm_contraction_blr}
\begin{proof}[\textbf{Proof of Theorem \ref{thrm_contraction_blr}}]
	We also start with an application of Lemma~\ref{lemma:exponential_blr}, and focus on~\eqref{lemma:exponential:1_blr}, applied to $\varepsilon:=\varepsilon_n$, that is:
	\begin{align*}
	\mathbb{E} \int \exp  \Biggl\{   \alpha    \Bigl( R(M) - R(M^*) \Bigr)
	+\lambda\Bigl( -r(M) + r(M^*) \Bigr)               - \log \left[\frac{d\widehat{\rho}_{\lambda}}{d \pi} (M)  \right]
	- \log\frac{2}{\varepsilon_n}
	\Biggr\}
	\widehat{\rho}_{\lambda}(d M)
	\leq \frac{\varepsilon_n}{2}.
	\end{align*}
	Using Chernoff's trick, this gives:
	$$
	\mathbb{E} \Bigl[ \mathbb{P}_{M\sim \widehat{\rho}_{\lambda}} (M\in\mathcal{A}_n) \Bigr]
	\geq 1-\frac{\varepsilon_n}{2}
	$$
	where
	$$
	\mathcal{A}_n = \left\{M: \alpha    \Bigl( R(M) - R(M^*) \Bigr)
	+\lambda\Bigl( -r(M) + r(M^*) \Bigr)      \leq      \log \left[\frac{d\widehat{\rho}_{\lambda}}{d \pi} (M)  \right]
	+ \log\frac{2}{\varepsilon_n} \right\}.
	$$
	Using the definition of $\widehat{\rho}_\lambda $, for $M\in \mathcal{A}_n$ we have
	\begin{align*}
	\alpha    \Bigl( R(M) - R(M^*) \Bigr)
	&    \leq  \lambda\Bigl( r(M) - r(M^*) \Bigr)  +       \log \left[\frac{d\widehat{\rho}_{\lambda}}{d \pi} (M)  \right]
	+ \log\frac{2}{\varepsilon_n}
	\\
	& \leq -\log\int\exp\left[-\lambda r(M)\right]\pi({\rm d}M) - \lambda r(M^*)
	+ \log\frac{2}{\varepsilon_n}
	\\
	& = \lambda\Bigl( \int r(M) \widehat{\rho}_{\lambda}({\rm d}M) - r(M^*) \Bigr)  +    \mathcal{K}(\widehat{\rho}_\lambda,\pi)
	+ \log\frac{2}{\varepsilon_n}
	\\
	& = \inf_{\rho} \left\{ \lambda\Bigl( \int r(M) \rho({\rm d}M) - r(M^*) \Bigr)  +    \mathcal{K}(\rho,\pi)
	+ \log\frac{2}{\varepsilon_n} \right\}.
	\end{align*}
	
	\noindent Now, let us define
	$$ \mathcal{B}_n = \left\{\forall\rho\text{, }\beta \left(-\int Rd\rho + R(M^*) \right)
	+ \lambda \left( \int r d\rho - r(M^*) \right) \leq
	\mathcal{K}(\rho, \pi) + \log \frac{2}{\varepsilon_n}\right\}. $$
	
	\noindent Using~\eqref{lemma:exponential:2_blr}, we have that
	$$
	\mathbb{E} \Bigl[\mathbf{1}_{\mathcal{B}_n} \Bigr]
	\geq 1-\frac{\varepsilon_n}{2}.
	$$
	We will now prove that, if $\lambda$ is such that $\alpha>0$,
	$$
	\mathbb{E} \Bigl[ \mathbb{P}_{M\sim \widehat{\rho}_{\lambda}} (M\in\mathcal{M}_n) \Bigr] \geq 
	\mathbb{E} \Bigl[ \mathbb{P}_{M\sim \widehat{\rho}_{\lambda}} (M\in\mathcal{A}_n)\mathbf{1}_{\mathcal{B}_n} \Bigr]
	$$
	which, together with
	\begin{align*}
	\mathbb{E} \Bigl[ \mathbb{P}_{M\sim \widehat{\rho}_{\lambda}} (M\in\mathcal{A}_n)\mathbf{1}_{\mathcal{B}_n} \Bigr]
	& = \mathbb{E} \Bigl[ (1-\mathbb{P}_{M\sim \widehat{\rho}_{\lambda}} (M\notin\mathcal{A}_n)) (1-\mathbf{1}_{\mathcal{B}^c_n})\Bigr]
	\\
	& \geq \mathbb{E} \Bigl[ 1-\mathbb{P}_{M\sim \widehat{\rho}_{\lambda}} (M\notin\mathcal{A}_n) - \mathbf{1}_{\mathcal{B}^c_n}
	\Bigr]
	\\
	& \geq 1-\varepsilon_n
	\end{align*}
	will bring
	\begin{equation*}
	\mathbb{E} \Bigl[ \mathbb{P}_{M\sim \widehat{\rho}_{\lambda}} (M\in\mathcal{M}_n) \Bigr] \geq 1-\varepsilon_n.
	\end{equation*}
	In order to do so, assume that we are on the set $\mathcal{B}_n$, and let $M\in\mathcal{A}_n$. Then,
	\begin{align*}
	\alpha    \Bigl( R(M) - R(M^*) \Bigr)
	& \leq \inf_{\rho} \left\{ \lambda\Bigl( \int r(M) \rho({\rm d}M) - r(M^*) \Bigr)  +    \mathcal{K}(\rho,\pi)
	+ \log\frac{2}{\varepsilon_n} \right\}
	\\
	& \leq \inf_{\rho} \left\{ \beta \Bigl( \int R(M) \rho({\rm d}M) - R(M^*) \Bigr)  +   2 \mathcal{K}(\rho,\pi)
	+ 2 \log\frac{2}{\varepsilon_n} \right\}
	\end{align*}
	that is,
	$$
	R(M) - R(M^*) \leq \inf_{\rho \in \mathcal{P}(\mathbb{R}^{p\times k}) } \frac{ \beta \left[\int Rd\rho -
		R(M^*) \right] + 2 \left[
		\mathcal{K}(\rho, \pi) + \log \frac{2}{\varepsilon} \right] } {
		\alpha  }
	$$
	or, rewriting it in terms of norms,
	$$
	\left\|\Pi_C(XMZ) -XM^*Z \right\|_{F}^2
	\\
	\leq 
	\inf_{\bar{M}\in\mathbb{R}^{p\times k} } \frac{ \beta  \int \|XMZ - X M^*Z \|_{F}^2 \rho_{\bar{M}} ({\rm d}M) + 2 \left[
		\mathcal{K}(\rho_{\bar{M}}, \pi) + \log \frac{2}{\varepsilon} \right] } {
		\alpha  }.
	$$
	We upper-bound the right-hand side exactly as in the proof of Theorem~\ref{thrm_blr}, this gives
	$M\in\mathcal{M}_n$.
	
\end{proof}

\subsection{Proof of Theorem \ref{thrm_main}}

\begin{lemma}
	\label{lemma:exponential}
	Under Assumptions~\ref{assum_bounded} and~\ref{assum_noise}, put
	\begin{equation}
	\label{defalpha}
	\alpha' = \left(\lambda
	-\frac{\lambda^{2} C'_1 }{2m(1-\frac{ C'_2 \lambda}{m})}\right)
\quad \text{ and } \quad
	\beta' = \left(\lambda
	+\frac{\lambda^{2} C'_1}{2m(1-\frac{ C'_2 \lambda}{m})}\right) .
	\end{equation}
	Then for any $\varepsilon\in(0,1)$, and $\lambda \in (0,m/C'_2)$,
	\begin{multline}
	\mathbb{E} \int \exp  \Biggl\{  \alpha' \Bigl( R'(M) - R'(M^*) \Bigr)
	+\lambda\Bigl( -r'(M) + r'(M^*) \Bigr)    
	- \log \left[\frac{d\widehat{\rho}'_{\lambda}}{d \pi} (M)  \right]
	- \log\frac{2}{\varepsilon}
	\Biggr\}
	\widehat{\rho}'_{\lambda}(d M)
	\leq \frac{\varepsilon}{2}
	\label{lemma:exponential:1}
	\end{multline}
	and
	\begin{align}
	\mathbb{E} \sup_{\rho \in \mathcal{P}(\mathbb{R}^{p\times k}) } \exp\Biggl[  \beta'
	\left(-\int R'd\rho + R'(M^*) \right)
	+ \lambda \left( \int r' d\rho - r'(M^*) \right) 
	-
	\mathcal{K}(\rho, \pi) - \log \frac{2}{\varepsilon}\Biggr] \leq
	\frac{\varepsilon}{2}.
	\label{lemma:exponential:2}
	\end{align}
\end{lemma}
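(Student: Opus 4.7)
The plan is to adapt the proof of Lemma~\ref{lemma:exponential_blr} to the i.i.d.\ sampling setting of inductive matrix completion, replacing the double sum over the $nq$ entries by the sum over the $m$ observations. Fix any $M$ with $\|XMZ\|_\infty \le C$ and define, for $i=1,\ldots,m$,
\[
T_i = \bigl(Y_i - (XM^*Z)_{\mathcal{I}_i}\bigr)^2 - \bigl(Y_i - (\Pi_C(XMZ))_{\mathcal{I}_i}\bigr)^2.
\]
By the i.i.d.\ assumption on the pairs $(\mathcal{I}_i,\mathcal{E}_i)$, the $T_i$'s are independent. Using the factorization $a^2-b^2=(a-b)(a+b)$ with $a-b = (\Pi_C(XMZ))_{\mathcal{I}_i}-(XM^*Z)_{\mathcal{I}_i}$ and $a+b = 2\mathcal{E}_i+(XM^*Z)_{\mathcal{I}_i}-(\Pi_C(XMZ))_{\mathcal{I}_i}$, together with independence of $\mathcal{I}_i$ and $\mathcal{E}_i$, Assumption~\ref{assum_noise} and $\|XM^*Z-\Pi_C(XMZ)\|_\infty\le 2C$, I obtain (exactly as in the proof of Lemma~\ref{lemma:exponential_blr}, with $\sigma,\xi$ replaced by $\sigma',\xi'$) the variance bound
\[
\sum_{i=1}^{m}\mathbb{E}[T_i^2] \;\le\; 8\,m\,(\sigma'^2+C^2)\,\mathbb{E}\bigl[((XM^*Z)_{\mathcal{I}}-(\Pi_C(XMZ))_{\mathcal{I}})^2\bigr] \;=\; m\,C_1'\,[R'(M)-R'(M^*)] \;=:\; v(M,M^*),
\]
where I used identity~\eqref{pyta_imc}. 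The same elementary manipulations as in the BLR proof give the Bernstein-type moment control
\[
\sum_{i=1}^{m}\mathbb{E}\bigl[|T_i|^k\bigr] \;\le\; v(M,M^*)\,\frac{k!\,(C_2')^{k-2}}{2}, \qquad k\ge 3.
\]

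Having these two bounds, I apply Bernstein's inequality (Lemma~\ref{lemmemassart}) with $U_i := T_i$, $v := v(M,M^*)$, $w := C_2'$, and $\zeta := \lambda/m$, which is admissible whenever $\lambda \in (0,m/C_2')$. This yields
\[
\mathbb{E}\exp\!\left[\lambda\bigl(R'(M)-R'(M^*)-r'(M)+r'(M^*)\bigr)\right] \;\le\; \exp\!\left[\frac{C_1'\,[R'(M)-R'(M^*)]\,\lambda^2}{2m(1-C_2'\lambda/m)}\right].
\]
Rearranging with the definition of $\alpha'$ in \eqref{defalpha} gives $\mathbb{E}\exp[\alpha'(R'(M)-R'(M^*))+\lambda(-r'(M)+r'(M^*))]\le 1$. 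Multiplying by $\varepsilon/2$, integrating against the prior $\pi$, and applying Fubini, I get
\[
\mathbb{E}\int\exp\!\left[\alpha'\bigl(R'(M)-R'(M^*)\bigr)+\lambda\bigl(-r'(M)+r'(M^*)\bigr)-\log\tfrac{2}{\varepsilon}\right]\pi(dM) \;\le\; \tfrac{\varepsilon}{2}.
\]
Finally, the standard density-change identity $\int \exp[h(M)]\pi(dM) = \int\exp[h(M)-\log\frac{d\widehat{\rho}'_\lambda}{d\pi}(M)]\widehat{\rho}'_\lambda(dM)$ transforms this into the desired inequality~\eqref{lemma:exponential:1}.

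To obtain~\eqref{lemma:exponential:2}, I re-apply Bernstein's inequality with $U_i := -T_i$ (keeping the same $v,w,\zeta$), which reverses the role of empirical and expected risk and, after rearranging with $\beta'$ from~\eqref{defalpha}, yields
\[
\mathbb{E}\exp\!\left[\beta'\bigl(-R'(M)+R'(M^*)\bigr)+\lambda\bigl(r'(M)-r'(M^*)\bigr)-\log\tfrac{2}{\varepsilon}\right] \;\le\; \tfrac{\varepsilon}{2}.
\]
Integrating against $\pi$ and using Fubini again, and then invoking the Donsker--Varadhan variational formula (Lemma~\ref{lemma:dv}) on the integral, the supremum over $\rho\in\mathcal{P}(\mathbb{R}^{p\times k})$ appears on the left-hand side and directly delivers~\eqref{lemma:exponential:2}. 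The main (mild) obstacle is solely bookkeeping in the moment calculation: one must be careful that the sub-exponential control applies to $\mathcal{E}_i$ while the randomness in the index $\mathcal{I}_i$ is handled by the independence between $\mathcal{E}_i$ and $\mathcal{I}_i$ granted by Assumption~\ref{assum_noise}; beyond that, the argument is a verbatim translation of the BLR proof with the scaling $nq \leadsto m$.
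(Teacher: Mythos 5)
Your proof is correct and follows essentially the same route as the paper, which simply applies Bernstein's inequality (Lemma~\ref{lemmemassart}) to the independent variables $V_i = (Y_i - (XM^*Z)_{\mathcal{I}_i})^2 - (Y_i - (\Pi_C(XMZ))_{\mathcal{I}_i})^2$ and $-V_i$, exactly your $T_i$ and $-T_i$, and otherwise refers back to the proof of Lemma~\ref{lemma:exponential_blr} with $nq$ replaced by $m$. Your write-up is in fact more explicit than the paper's about where the identity~\eqref{pyta_imc} and the independence between $\mathcal{E}_i$ and $\mathcal{I}_i$ enter the moment bounds.
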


\begin{proof}[\textbf{Proof of Lemma~\ref{lemma:exponential}}]
The inequality \eqref{lemma:exponential:1} is proved in a similar way to the proof of Lemma \ref{lemma:exponential_blr}. That is we apply Lemma \ref{lemmemassart} to the following independent random variables
	$$ 
	V_{i} 
	=   
	\left( Y_{i} - (XM^*Z)_{i} \right)^{2}
	- 
	\left( Y_{i} - (\Pi_C(XMZ))_{i} \right)^{2} ,i = 1,\hdots,m.
	$$
The proof of the inequality \eqref{lemma:exponential:2} is processed similar in the proof of Lemma \ref{lemma:exponential_blr} in which we apply Lemma \ref{lemmemassart} to the independent random variables $ -V_{i} ,i = 1,\hdots,m. $
\end{proof}

\begin{proof}[\textbf{Proof of Theorem \ref{thrm_main}}]
Similar to the proof of Theorem \ref{thrm_blr}, until the \eqref{interm3bis_blr}, and noting that using~\eqref{pyta_imc} we have
\begin{align*}
 \int R' d\widehat{\rho}_{\lambda} - R'(M^*) 
  =
  \int \left\|\Pi_C(XMZ) -XM^*Z \right\|_{F,\Pi}^2\widehat{\rho}'_{\lambda}({\rm d}M)
 &  \geq   
 \left\|\int \Pi_C(XMZ)\widehat{\rho}'_{\lambda}({\rm d}M) -XM^*Z \right\|_{F,\Pi}^2
  \\
  & \geq   \left\|\widehat{XMZ}_{\lambda} -XM^*Z \right\|_{F,\Pi}^2,
\end{align*}
thus we obtain
\begin{equation}\label{interm3bis}
\mathbb{P}\Biggl\{ 
\left\|\widehat{XMZ}_{\lambda} -XM^*Z \right\|_{F,\Pi}^2
\leq 
\inf_{\rho \in \mathcal{P}(\mathbb{R}^{p\times k}) } 
\frac{ \int r' d\rho - r'(M^*) 
+
\frac{1}{\lambda}\left[\mathcal{K}(\rho, \pi)
+ 
\log\frac{2}{\varepsilon}\right] } {\frac{\alpha'}{\lambda} } \Biggr\} \geq 1-\frac{\varepsilon}{2}.
\end{equation}

We consider now the consequences of inequality \eqref{lemma:exponential:2} in Lemma~\ref{lemma:exponential}. Chernoff's trick and rearranging terms a little, we get, $ \forall \rho \in\mathcal{P}(\mathbb{R}^{p\times k}) $, with probability at least $ 1 - \frac{\varepsilon}{2} $,
\begin{align}
 \int r'd\rho - r'(M^*)
 \leq 
 \frac{\beta'}{\lambda} 
 \int \|\Pi_C(XMZ)-X M^*Z \|_{F,\Pi}^2 \rho({\rm d}M) 
 + 
 \frac{1}{\lambda}\left[
\mathcal{K}(\rho, \pi) + \log \frac{2}{\varepsilon} \right]
 .
\label{interm4}
\end{align}
\\
Combining (\ref{interm4}) and (\ref{interm3bis}) with a union bound argument gives the bound and noting that for any $(i,j)$, $(X M^*Z)_{i,j} \in [-C,C]$ implies
that $| (\Pi_C(X MZ))_{i,j} - (X M^*Z)_{i,j} | \leq | (X MZ)_{i,j} - (X M^*Z)_{i,j} |$ and thus
\begin{align*}
\mathbb{P}\Biggl\{  \left\|\widehat{XMZ}_{\lambda} -XM^*Z \right\|_{F,\Pi}^2
\leq 
\inf_{\rho \in \mathcal{P}(\mathbb{R}^{p\times k}) } 
\frac{ \beta' \int \|XMZ -X M^*Z \|_{F,\Pi}^2 \rho({\rm d}M) + 2 \left[
\mathcal{K}(\rho, \pi) + \log \frac{2}{\varepsilon} \right] } {
\alpha'  } \Biggr\} 
\geq 
1-\varepsilon.
\end{align*}
We are now making the right-hand side in the inequality more explicit. In order to do so, we restrict the infimum bound above to the distributions given by Definition~\ref{dfn:posterior:transla}.
\begin{multline}
\mathbb{P}\Biggl\{  \left\|\widehat{XMZ}_{\lambda} -XM^*Z \right\|_{F,\Pi}^2
\leq 
\inf_{\bar{M}\in\mathbb{R}^{p\times k} } 
\frac{ \beta'  \int \|XMZ -X M^*Z \|_{F,\Pi}^2 \rho_{\bar{M}} ({\rm d}M) + 2 \left[
\mathcal{K}(\rho_{\bar{M}}, \pi) 
+ 
\log \frac{2}{\varepsilon} \right] } {
\alpha'  } \Biggr\} 
\geq 
1-\varepsilon.
\label{PAC_bound}
\end{multline}
We see immediately that Dalalyan's lemma will be extremely useful for that. First, Lemma~\ref{lemma:arnak:2} provides an upper bound on $\mathcal{K}(\rho_{\bar{M}}, \pi)$.

 Moreover,
\begin{align*}
 & \int  \|XMZ - X M^*Z \|_{F,\Pi}^2 \rho_{\bar{M}} ({\rm d}M)
  \\
  & \leq  
   \int \|X\bar{M}Z - X M^*Z - XMZ \|_{F,\Pi}^2 \pi({\rm d}M)
  \\
  & =   \|X\bar{M}Z - X M^*Z  \|_{F,\Pi}^2
  -
  2 \int \sum_{i,j} \Pi_{i,j} (X\bar{M}Z - X M^*Z)_{j,i} (XMZ)_{i,j} \pi({\rm d}M)
 + 
 \int \|XMZ \|_{F,\Pi}^2 \pi({\rm d}M).
\end{align*}
The second term in the above right-hand side is null because $\pi$ is centered, and thus
\begin{align*}
  \int \|XMZ - X M^*Z \|_{F,\Pi}^2 \rho_{\bar{M}} ({\rm d}M)
  & \leq   \|X\bar{M}Z - X M^*Z \|_{F,\Pi}^2 + \int \|XMZ \|_{F,\Pi}^2 \pi({\rm d}M)
  \\
  & \leq   \|X\bar{M}Z - X M^*Z \|_{F,\Pi}^2 +  \int \|X M Z \|_{F}^2 \, \pi({\rm d}M)
  \\
  & \leq \|X\bar{M}Z - X M^*Z \|_{F,\Pi}^2 
  +  \|X \|_{F}^2 \|Z\|_{F}^2
   \int \|M \|_{F}^2 \, \pi({\rm d}M)
  \\
  & \leq \|X\bar{M}Z - X M^*Z \|_{F,\Pi}^2 
  +  \|X \|_{F}^2\|Z \|_{F}^2 pk \tau^2
\end{align*}
where we used elementary properties of the Frobenius norm, and Lemma~\ref{lemma:arnak:1} in the last line. We can now plug this (and Lemma~\ref{lemma:arnak:2}) back into~\eqref{PAC_bound} to get:
\begin{multline*}
\mathbb{P}\Biggl\{  \left\|\widehat{XMZ}_{\lambda} -XM^*Z \right\|_{F,\Pi}^2
\leq 
\inf_{\bar{M}\in\mathbb{R}^{p\times k} } \Biggl[ \frac{\beta'}{\alpha'} \|X\bar{M}Z - X M^*Z \|_{F,\Pi}^2 
+ 
\frac{\beta'}{\alpha'} \|X \|_{F}^2\|Z \|_{F}^2 pk \tau^2
\\
+ 
 \frac{1}{\alpha'} \left( 4 {\rm rank} (\bar{M}) (k+p+2) \log \left( 1+ \frac{\| \bar{ M } \|_F}{\tau \sqrt{2{\rm rank} (\bar{M})}} \right) + 2 \log \frac{2}{\varepsilon} \right) \Biggr] \Biggr\} 
\geq 
1-\varepsilon.
\end{multline*}
We are now making the constants be explicit.
First, if $\lambda \leq m/(2C'_2)$, then $2m( 1 - C'_2 \lambda/m ) \geq m$ and thus
$$
\frac{\beta'}{\alpha'}
= 
\frac{1+ \frac{\lambda C'_1 }{2m(1-\frac{ C'_2 \lambda}{m })}}{1- \frac{\lambda C'_1 }{2m(1-\frac{ C'_2 \lambda}{m})}}
 \leq
  \frac{1+ \frac{\lambda C'_1 }{m} }{1- \frac{\lambda C'_1 }{m} }.
$$
Then, $ \lambda \leq \frac{m \delta}{C'_1(1+\delta)} $ leads to
$
\frac{\beta'}{\alpha'} \leq (1+\delta).
$

Note that $\lambda'^* = m \min(1/(2C'_2), \delta/[C'_1(1+\delta)] )$ satisfies these two conditions, so from now $\lambda = \lambda'^*$. We also use the following:
\begin{align*}
\frac{1}{\alpha'} 
 = 
\frac{1}{\lambda'^*\left(1- \frac{\lambda'^* C'_1 }{2m( 1- C_2 \lambda'^* / m )}\right)}
  \leq 
\frac{\beta'}{\lambda'^* \alpha'} 
 \leq 
\frac{(1+\delta)}{m \min(1/(2C'_2), \delta/[C'_1(1+\delta)] )} 
 \leq 
\frac{C'_1 (1+\delta)^2 }{m \delta}.
\end{align*}
So far the bound is: 
\begin{multline*}
\mathbb{P}\Biggl\{  \left\|
\widehat{XMZ}_{\lambda'^*} -XM^*Z \right\|_{F,\Pi}^2
\leq 
\inf_{\bar{M}\in\mathbb{R}^{p\times k} } \Biggl[ (1+\delta) \|X\bar{M}Z - X M^*Z \|_{F,\Pi}^2 
+
 (1+\delta) \|X \|_{F}^2\|Z \|_{F}^2 p k\tau^2
\\
+ 
 \frac{C'_1 (1+\delta)^2 \left( 4 {\rm rank} (\bar{M}) (k+p+2) \log \left( 1+ \frac{\| \bar{ M } \|_F}{\tau \sqrt{2{\rm rank} (\bar{M})}} \right) + 2 \log \frac{2}{\varepsilon} \right)}{ m \delta } \Biggr] \Biggr\} 
\geq 
1-\varepsilon.
\end{multline*}
In particular, with probability at least $ 1-\varepsilon $, the choice $\tau^2 = C'_1 (k+p)/(m k p \|X \|_{F}^2 \|Z \|_{F}^2 )$ gives
\begin{multline*}
 \left\|\widehat{XMZ}_{\lambda'^*} -XM^*Z \right\|_{F,\Pi}^2
\leq 
\inf_{\bar{M}\in\mathbb{R}^{p\times k} } \Biggl[ (1+\delta) \|X\bar{M}Z - X M^*Z \|_{F,\Pi}^2 + \frac{C'_1 (1+\delta) (k+p)}{m }+
\\
 \frac{C'_1 (1+\delta)^2 \left( 4 {\rm rank} (\bar{M}) (k+p+2) \log \left( 1+\frac{\| X \|_F \|Z \|_{F} \| \bar{ M } \|_F}{ \sqrt{C'_1}} 
 	\sqrt{ \frac{m k p}{(k+p) {\rm rank}(\bar{M})}} \right) + 2 \log \frac{2}{\varepsilon} \right)}{ m\delta } \Biggr].
\end{multline*}
\end{proof}

\subsection{Proof of theorem \ref{thrm_contraction}}

\begin{proof}
	The proof is proceeded completely similar to the proof of Theorem \ref{thrm_contraction_blr}, in Section \ref{sc_proof_thrm_contraction_blr}.
		
\end{proof}

\section{Comments on algorithm implementation}

For the case of inductive matrix completion, we write the logarithm of the density of the posterior
$$
\log  \widehat{\rho}_{\lambda}(M)
= 
- \frac{\lambda}{n} \sum_{i=1}^n ( Y_i - (\Pi_C(XMZ))_{i} )^2
- 
\frac{p+m+2}{2} \log \det (\tau^2 \mathbf{I}_{m} + MM^\intercal ) .
$$
Let us now differentiate this expression in $M$. Note that the term $( Y_i - (\Pi_C(XMZ))_{i} )^2$ does actually not depend on $M$ locally if $(XMZ)_{i}\notin [-C,C]$, in this case its differential with respect to $M$ is $\mathbf{0}_{p \times m}$. Otherwise, $( Y_i - (\Pi_C(XMZ))_{i} )^2=( Y_i - (XMZ)_{i} )^2$.
In order to be able to differentiate the term $(XMZ)_{i}$, let us introduce a notation for the entries of $\mathcal{I}_i$: $\mathcal{I}_i =(a_i,b_i)$.
Then
$ \nabla (XMZ)_{i} =D $ where the matrix $D\in\mathbb{R}^{p\times m}$ satisfies $D_{x,y} = \mathbf{1}_{\{x=b_j\}} X_{a_j,y}$. Then
$$
\nabla  \log   \widehat{\rho}_{\lambda}( M)
= 
\frac{2\lambda}{n} \sum_{i=1}^n (\nabla (XMZ)_{i}) ( Y_i - (XMZ)_{i} ) \mathbf{1}_{\{ |(XMZ)_{i}| < C \}}
- 
(p+m+2) (\tau^2 \mathbf{I}_{m} + MM^\intercal )^{-1} M .
$$

\noindent In the above calculation still requires to calculate a $p\times p$ matrix inversion at each iteration, for very large $ p $, this might be expensive and can slow down the algorithm. Therefore, we could replace this matrix inversion by its accurately approximation through a convex optimization.  It is noted that the matrix
$\mathbf{B} := (\tau^2 \mathbf{I}_{m} + MM^\intercal )^{-1} M $ is the solution to the following convex optimization problem: 
$
\min_{\mathbf{B} } \big\{\| \mathbf{I}_p- M^\top \mathbf{B}  \|_F^2 + \tau^2\|\mathbf{B} \|_F^2\big\}.
$
The solution of this optimization problem can be conveniently obtained by using the package `\texttt{glmnet}' \cite{glmnet} (with the family option `\texttt{mgaussian}'). This avoids to perform matrix inversion or other costly calculation.  However,  we note here that the LMC algorithm is being used with approximate gradient evaluation, theoretical assessment of this approach can be found in \cite{dalalyan2019user}.

\begin{algorithm}
	\caption{LMC}
	\begin{algorithmic}[0]
		\State \textbf{Input}: The data.
		\State \textbf{Parameters}: Positive real numbers $ \lambda, \tau,h,T $. 
		\State \textbf{Output}: The matrix $\widehat{M} $
		\State \textbf{Initialize}: $ M_0 ,  \widehat{M} = \mathbf{0}_{m\times p}$ 
		\For{$k \gets 1$ to $T$} 
		\State Sample $ M_{k} $ from \eqref{langevinMC};
		\State  $ \widehat{M} \gets  \widehat{M} +  M_{k}/T $
		\EndFor
	\end{algorithmic}
	\label{lmc_algorithm}
\end{algorithm}

\begin{algorithm}{}
	\caption{MALA}
	\begin{algorithmic}[0]
		\State \textbf{Input}: The data.
		\State \textbf{Parameters}: Positive real numbers $ \lambda, \tau,h,T $
		\State \textbf{Output}: The matrix $\widehat{M} $
		\State \textbf{Initialize}: $M_0 ;  \widehat{M} = \mathbf{0}_{m\times p}$ 
		\For{$k = 1$ to $T$} 
		\State Sample $\tilde{M}_{k} $ from \eqref{mala}.
		\State Set $M_k = \tilde{M}_{k}$ with probability $ A_{MALA} $, from \eqref{eq_ac_mala}, otherwise $M_k = M_{k-1}$ .
		\State  $\widehat{M} \gets  \widehat{M} +  M_{k}/T $ .
		\EndFor
	\end{algorithmic}
	\label{mala_algoritm}
\end{algorithm}

\clearpage
\begin{footnotesize}
\bibliographystyle{abbrv}

\begin{thebibliography}{10}
	
	\bibitem{TUTO}
	P.~Alquier.
	\newblock User-friendly introduction to {PAC}-{B}ayes bounds.
	\newblock {\em preprint arXiv:2110.11216}, 2021.
	
	\bibitem{alquier2015properties}
	P.~Alquier, J.~Ridgway, and N.~Chopin.
	\newblock On the properties of variational approximations of {G}ibbs
	posteriors.
	\newblock {\em The Journal of Machine Learning Research}, 17(1):8374--8414,
	2016.
	
	\bibitem{anderson1951estimating}
	T.~W. Anderson.
	\newblock Estimating linear restrictions on regression coefficients for
	multivariate normal distributions.
	\newblock {\em Annals of mathematical statistics}, 22(3):327--351, 1951.
	
	\bibitem{bissiri2013general}
	P.~G. Bissiri, C.~C. Holmes, and S.~G. Walker.
	\newblock A general framework for updating belief distributions.
	\newblock {\em Journal of the Royal Statistical Society: Series B (Statistical
		Methodology)}, pages n/a--n/a, 2016.
	
	\bibitem{candes2010matrix}
	E.~J. Cand{\`e}s and Y.~Plan.
	\newblock Matrix completion with noise.
	\newblock {\em Proceedings of the IEEE}, 98(6):925--936, 2010.
	
	\bibitem{candes2008enhancing}
	E.~J. Candes, M.~B. Wakin, and S.~P. Boyd.
	\newblock Enhancing sparsity by reweighted $\ell_1$ minimization.
	\newblock {\em Journal of Fourier analysis and applications}, 14(5-6):877--905,
	2008.
	
	\bibitem{catoni2004statistical}
	O.~Catoni.
	\newblock {\em Statistical learning theory and stochastic optimization}, volume
	1851 of {\em Saint-Flour Summer School on Probability Theory 2001 (Jean
		Picard ed.), Lecture Notes in Mathematics}.
	\newblock Springer-Verlag, Berlin, 2004.
	
	\bibitem{catonibook}
	O.~Catoni.
	\newblock {\em {PAC}-{B}ayesian supervised classification: the thermodynamics
		of statistical learning}.
	\newblock IMS Lecture Notes---Monograph Series, 56. Institute of Mathematical
	Statistics, Beachwood, OH, 2007.
	
	\bibitem{cottet20181}
	V.~Cottet and P.~Alquier.
	\newblock {1-Bit matrix completion: PAC-Bayesian analysis of a variational
		approximation}.
	\newblock {\em Machine Learning}, 107(3):579--603, 2018.
	
	\bibitem{dalalyan2008aggregation}
	A.~Dalalyan and A.~B. Tsybakov.
	\newblock Aggregation by exponential weighting, sharp {PAC}-{B}ayesian bounds
	and sparsity.
	\newblock {\em Machine Learning}, 72(1-2):39--61, 2008.
	
	\bibitem{dalalyan2017theoretical}
	A.~S. Dalalyan.
	\newblock Theoretical guarantees for approximate sampling from smooth and
	log-concave densities.
	\newblock {\em Journal of the Royal Statistical Society: Series B (Statistical
		Methodology)}, 3(79):651--676, 2017.
	
	\bibitem{dalalyan2020exponential}
	A.~S. Dalalyan.
	\newblock Exponential weights in multivariate regression and a low-rankness
	favoring prior.
	\newblock In {\em Annales de l'Institut Henri Poincar{\'e}, Probabilit{\'e}s et
		Statistiques}, volume~56, pages 1465--1483. Institut Henri Poincar{\'e},
	2020.
	
	\bibitem{dalalyan2018exponentially}
	A.~S. Dalalyan, E.~Grappin, and Q.~Paris.
	\newblock On the exponentially weighted aggregate with the laplace prior.
	\newblock {\em The Annals of Statistics}, 46(5):2452--2478, 2018.
	
	\bibitem{dalalyan2019user}
	A.~S. Dalalyan and A.~Karagulyan.
	\newblock User-friendly guarantees for the langevin monte carlo with inaccurate
	gradient.
	\newblock {\em Stochastic Processes and their Applications},
	129(12):5278--5311, 2019.
	
	\bibitem{dalalyan2012sparse}
	A.~S. Dalalyan and A.~B. Tsybakov.
	\newblock Sparse regression learning by aggregation and langevin monte-carlo.
	\newblock {\em Journal of Computer and System Sciences}, 78(5):1423--1443,
	2012.
	
	\bibitem{durmus2019high}
	A.~Durmus and E.~Moulines.
	\newblock High-dimensional {B}ayesian inference via the unadjusted langevin
	algorithm.
	\newblock {\em Bernoulli}, 25(4A):2854--2882, 2019.
	
	\bibitem{foygel2011learning}
	R.~Foygel, O.~Shamir, N.~Srebro, and R.~Salakhutdinov.
	\newblock Learning with the weighted trace-norm under arbitrary sampling
	distributions.
	\newblock In {\em Advances in Neural Information Processing Systems}, pages
	2133--2141, 2011.
	
	\bibitem{glmnet}
	J.~Friedman, T.~Hastie, and R.~Tibshirani.
	\newblock Regularization paths for generalized linear models via coordinate
	descent.
	\newblock {\em Journal of Statistical Software}, 33(1):1--22, 2010.
	
	\bibitem{grunwald2017inconsistency}
	P.~Gr{\"u}nwald and T.~Van~Ommen.
	\newblock Inconsistency of {B}ayesian inference for misspecified linear models,
	and a proposal for repairing it.
	\newblock {\em {B}ayesian Analysis}, 12(4):1069--1103, 2017.
	
	\bibitem{gue2019}
	B.~Guedj.
	\newblock A primer on {PAC-Bayesian} learning.
	\newblock In {\em Proceedings of the second congress of the French Mathematical
		Society}, 2019.
	
	\bibitem{softimputepackage}
	T.~Hastie and R.~Mazumder.
	\newblock {\em softImpute: Matrix Completion via Iterative Soft-Thresholded
		SVD}, 2021.
	\newblock R package version 1.4-1.
	
	\bibitem{hoeffding1963probability}
	W.~Hoeffding.
	\newblock Probability inequalities for sums of bounded random variables.
	\newblock {\em Journal of the American Statistical Association},
	58(301):13--30, 1963.
	
	\bibitem{hsieh2015pu}
	C.-J. Hsieh, N.~Natarajan, and I.~Dhillon.
	\newblock Pu learning for matrix completion.
	\newblock In {\em International conference on machine learning}, pages
	2445--2453. PMLR, 2015.
	
	\bibitem{izenman2008modern}
	A.~J. Izenman.
	\newblock Modern multivariate statistical techniques.
	\newblock {\em Regression, classification and manifold learning}, 10:978--0,
	2008.
	
	\bibitem{jain2013provable}
	P.~Jain and I.~S. Dhillon.
	\newblock Provable inductive matrix completion.
	\newblock {\em arXiv preprint arXiv:1306.0626}, 2013.
	
	\bibitem{jana2019bayesian}
	S.~Jana, N.~Balakrishnan, and J.~S. Hamid.
	\newblock Bayesian growth curve model useful for high-dimensional longitudinal
	data.
	\newblock {\em Journal of Applied Statistics}, 46(5):814--834, 2019.
	
	\bibitem{jana2017inference}
	S.~Jana et~al.
	\newblock {\em Inference for Generalized Multivariate Analysis of Variance
		(GMANOVA) Models and High-dimensional Extensions}.
	\newblock PhD thesis, 2017.
	
	\bibitem{klopp2014noisy}
	O.~Klopp.
	\newblock Noisy low-rank matrix completion with general sampling distribution.
	\newblock {\em Bernoulli}, 20(1):282--303, 2014.
	
	\bibitem{kno2019}
	J.~Knoblauch, J.~Jewson, and T.~Damoulas.
	\newblock An optimization-centric view on bayes’ rule: Reviewing and
	generalizing variational inference.
	\newblock {\em Journal of Machine Learning Research}, 23(132):1--109, 2022.
	
	\bibitem{koltchinskii2011nuclear}
	V.~Koltchinskii, K.~Lounici, and A.~B. Tsybakov.
	\newblock Nuclear-norm penalization and optimal rates for noisy low-rank matrix
	completion.
	\newblock {\em Ann. Statist.}, 39(5):2302--2329, 2011.
	
	\bibitem{kshirsagar1995growth}
	A.~Kshirsagar and W.~Smith.
	\newblock {\em Growth curves}, volume 145.
	\newblock CRC Press, 1995.
	
	\bibitem{luo2018leveraging}
	C.~Luo, J.~Liang, G.~Li, F.~Wang, C.~Zhang, D.~K. Dey, and K.~Chen.
	\newblock Leveraging mixed and incomplete outcomes via reduced-rank modeling.
	\newblock {\em Journal of Multivariate Analysis}, 167:378--394, 2018.
	
	\bibitem{mai2015}
	T.~T. Mai and P.~Alquier.
	\newblock A {B}ayesian approach for noisy matrix completion: Optimal rate under
	general sampling distribution.
	\newblock {\em Electron. J. Statist.}, 9(1):823--841, 2015.
	
	\bibitem{mai2017pseudo}
	T.~T. Mai and P.~Alquier.
	\newblock Pseudo-{B}ayesian quantum tomography with rank-adaptation.
	\newblock {\em Journal of Statistical Planning and Inference}, 184:62--76,
	2017.
	
	\bibitem{mai2022optimal}
	T.~T. Mai and P.~Alquier.
	\newblock Optimal quasi-bayesian reduced rank regression with incomplete
	response.
	\newblock {\em arXiv preprint arXiv:2206.08619}, 2022.
	
	\bibitem{MR2319879}
	P.~Massart.
	\newblock {\em Concentration inequalities and model selection}, volume 1896 of
	{\em Lecture Notes in Mathematics}.
	\newblock Springer, Berlin, 2007.
	\newblock Lectures from the 33rd Summer School on Probability Theory held in
	Saint-Flour, July 6--23, 2003, Edited by Jean Picard.
	
	\bibitem{McA}
	D.~McAllester.
	\newblock Some {{PAC}}-{B}ayesian theorems.
	\newblock In {\em Proceedings of the Eleventh Annual Conference on
		Computational Learning Theory}, pages 230--234, New York, 1998. ACM.
	
	\bibitem{natarajan2014inductive}
	N.~Natarajan and I.~S. Dhillon.
	\newblock Inductive matrix completion for predicting gene--disease
	associations.
	\newblock {\em Bioinformatics}, 30(12):i60--i68, 2014.
	
	\bibitem{negahban2012restricted}
	S.~Negahban and M.~J. Wainwright.
	\newblock Restricted strong convexity and weighted matrix completion: optimal
	bounds with noise.
	\newblock {\em J. Mach. Learn. Res.}, 13:1665--1697, 2012.
	
	\bibitem{potthoff1964generalized}
	R.~F. Potthoff and S.~Roy.
	\newblock A generalized multivariate analysis of variance model useful
	especially for growth curve problems.
	\newblock {\em Biometrika}, 51(3-4):313--326, 1964.
	
	\bibitem{rsoftware}
	{R Core Team}.
	\newblock {\em R: A Language and Environment for Statistical Computing}.
	\newblock R Foundation for Statistical Computing, Vienna, Austria, 2022.
	
	\bibitem{rigollet2012sparse}
	P.~Rigollet and A.~B. Tsybakov.
	\newblock Sparse estimation by exponential weighting.
	\newblock {\em Statistical Science}, 27(4):558--575, 2012.
	
	\bibitem{roberts1998optimal}
	G.~O. Roberts and J.~S. Rosenthal.
	\newblock Optimal scaling of discrete approximations to langevin diffusions.
	\newblock {\em Journal of the Royal Statistical Society: Series B (Statistical
		Methodology)}, 60(1):255--268, 1998.
	
	\bibitem{roberts2002langevin}
	G.~O. Roberts and O.~Stramer.
	\newblock Langevin diffusions and metropolis-hastings algorithms.
	\newblock {\em Methodology and computing in applied probability},
	4(4):337--357, 2002.
	
	\bibitem{rosen2021bilinear}
	D.~v. Rosen.
	\newblock Bilinear regression with rank restrictions on the mean and dispersion
	matrix.
	\newblock In {\em Methodology and Applications of Statistics}, pages 193--211.
	Springer, 2021.
	
	\bibitem{STW}
	J.~Shawe-Taylor and R.~Williamson.
	\newblock A {{PAC}} analysis of a {B}ayes estimator.
	\newblock In {\em Proceedings of the Tenth Annual Conference on Computational
		Learning Theory}, pages 2--9, New York, 1997. ACM.
	
	\bibitem{von2018bilinear}
	D.~Von~Rosen.
	\newblock Bilinear regression analysis.
	\newblock {\em Lecture notes in statistics}, 220, 2018.
	
	\bibitem{woolson1980growth}
	R.~F. Woolson and J.~D. Leeper.
	\newblock Growth curve analysis of complete and incomplete longitudinal data.
	\newblock {\em Communications in Statistics-theory and Methods},
	9(14):1491--1513, 1980.
	
	\bibitem{yang2018fast}
	L.~Yang, J.~Fang, H.~Duan, H.~Li, and B.~Zeng.
	\newblock Fast low-rank {B}ayesian matrix completion with hierarchical gaussian
	prior models.
	\newblock {\em IEEE Transactions on Signal Processing}, 66(11):2804--2817,
	2018.
	
	\bibitem{zhang2020drimc}
	W.~Zhang, H.~Xu, X.~Li, Q.~Gao, and L.~Wang.
	\newblock Drimc: an improved drug repositioning approach using bayesian
	inductive matrix completion.
	\newblock {\em Bioinformatics}, 36(9):2839--2847, 2020.
	
	\bibitem{zilber2022inductive}
	P.~Zilber and B.~Nadler.
	\newblock Inductive matrix completion: No bad local minima and a fast
	algorithm.
	\newblock In K.~Chaudhuri, S.~Jegelka, L.~Song, C.~Szepesvari, G.~Niu, and
	S.~Sabato, editors, {\em Proceedings of the 39th International Conference on
		Machine Learning}, volume 162 of {\em Proceedings of Machine Learning
		Research}, pages 27671--27692. PMLR, 2022.
	
\end{thebibliography}

\end{footnotesize}

\end{document}